\newtheorem{theorem}{Theorem}
\newtheorem{theorem*}{Theorem}
\newtheorem{corollary}[theorem]{Corollary}
\newtheorem{lemma}[theorem]{Lemma}
\newtheorem{proposition}[theorem]{Proposition}
\newtheorem{definition}[theorem]{Definition}
\newtheorem{claim}[theorem]{Claim}
\newtheorem{remark}[theorem]{Remark}
\newtheorem{conjecture}[theorem]{Conjecture}
\newcommand{\F}{\mathbb{F}}
\newcommand{\C}{\mathbb{C}}
\renewcommand{\angle}[1]{\mathopen{\langle} #1\mathclose{\rangle}}
\newcommand{\VP}{\mathrm{VP}}
\newcommand{\VNP}{\mathrm{VNP}}
\renewcommand{\P}{\mathrm{P}}
\newcommand{\NP}{\mathrm{NP}}
\newcommand{\vpnc}{\mathrm{VP}_{nc}}
\newcommand{\vnpnc}{\mathrm{VNP}_{nc}}
\newcommand{\vsk}{\mathrm{VSKEW}_{nc}}
\newcommand{\vbpnc}{\mathrm{VBP}_{nc}}
\newcommand{\vncnc}{\mathrm{VNC}_{nc}}
\newcommand{\abp}{\le_{abp}}
\newcommand{\pal}{\mathrm{PAL}}
\newcommand{\poly}{\mathrm{poly}}
\newcommand{\DET}{\mathrm{DET}}
\newcommand{\PER}{\mathrm{PER}}
\newcommand{\PAL}{\mathrm{PAL}}
\title{Noncommutative Valiant's Classes: Structure and Complete Problems}
\author{V. Arvind\thanks{Institute of Mathematical Sciences, Chennai,
    India, \texttt{email: arvind@imsc.res.in}} \and Pushkar S
  Joglekar\thanks{Vishwakarma Institute of Technology, Pune, India,
    \texttt{email: joglekar.pushkar@gmail.com}} \and S. Raja\thanks{Institute of Mathematical Sciences, Chennai, India,
    \texttt{email: rajas@imsc.res.in}}}
\begin{document}

\maketitle

\begin{abstract}
In this paper we explore the noncommutative analogues, $\vpnc$ and
$\vnpnc$, of Valiant's algebraic complexity classes and show some
striking connections to classical formal language theory. Our main
results are the following:

\begin{itemize}
\item We show that Dyck polynomials (defined from the Dyck languages
  of formal language theory) are complete for the class $\vpnc$ under
  $\abp$ reductions. Likewise, it turns out that $\pal$ (Palindrome
  polynomials defined from palindromes) are complete for the class
  $\vsk$ (defined by polynomial-size skew circuits) under $\abp$
  reductions. The proof of these results is by suitably adapting the
  classical Chomsky-Sch\"{u}tzenberger theorem showing that Dyck
  languages are the hardest CFLs.

\item Next, we consider the class $\vnpnc$. It is known~\cite{HWY10a}
  that, assuming the sum-of-squares conjecture, the noncommutative
  polynomial $\sum_{w\in\{x_0,x_1\}^n}ww$ requires exponential size
  circuits. We unconditionally show that $\sum_{w\in\{x_0,x_1\}^n}ww$
  is not $\vnpnc$-complete under the projection reducibility. As a
  consequence, assuming the sum-of-squares conjecture, we exhibit a
  strictly infinite hierarchy of p-families under projections inside
  $\vnpnc$ (analogous to Ladner's theorem~\cite{Ladner75}). In the
  final section we discuss some new $\vnpnc$-complete problems under
  $\abp$-reductions.

\item Inside $\vpnc$ too we show there is a strict hierarchy of
  p-families (based on the nesting depth of Dyck polynomials) under
  the $\abp$ reducibility.
\end{itemize}

\end{abstract}

\section{Introduction}

Proving superpolynomial size lower bounds for arithmetic circuits that
compute the permanent polynomial $\PER_n$ is a central open problem in
computational complexity theory. This problem has a rich history in
the field, starting with the work of Strassen on matrix multiplication
\cite{str}.

In the late 1970's, Valiant, in his seminal work \cite{Val79}, defined
the arithmetic analogues of P and NP: namely VP and VNP. Informally,
VP consists of multivariate (commutative) polynomials over a field
$\F$ that have polynomial size circuits. The class VNP, which
corresponds to NP (in fact $\#$P to be precise) has a more technical
definition which we will give later. Valiant showed that $\PER_n$ is
VNP-complete w.r.t projection reductions. Thus, $\VP \neq \VNP$ iff
$\PER_n$ requires superpolynomial in $n$ size arithmetic circuits.

In 1990 paper \cite{N91}, Nisan explored the same question
for \emph{noncommutative} polynomials. The noncommutative polynomial ring
$\F\angle{x_1,\ldots,x_n}$ consists of $\F$-linear combinations of
words (we call them monomials) over the alphabet $X=
\{x_1,\ldots,x_n\}$.

We can analogously define noncommutative arithmetic circuits for
polynomials in $\F\angle{X}$ where the inputs to multiplication gates are
ordered from left to right. A natural definition of the noncommutative
$\PER_n$ is 
\[
\PER_n=\sum_{\sigma \in S_n}x_{1,\sigma(1)}x_{2,\sigma(2)}\ldots x_{n,\sigma(n)}
\]
over $X=\{x_{ij}\}_{1 \leq i,j\leq n}$. 

Can we show that $\PER_n$ requires superpolynomial size noncommutative
arithmetic circuits? One would expect this problem to be easier than
in the commutative setting. Indeed, for the model of noncommutative
algebraic branching programs (ABPs), Nisan \cite{N91} showed
exponential lower bounds for $\PER_n$ (and even the determinant
polynomial $\DET_n$). Unlike in the commutative world, where ABPs are
nearly as powerful as arithmetic circuits, in the noncommutative
setting, Nisan \cite{N91} could show an exponential separation between
noncommutative circuits and noncommutative ABPs. However, showing that
$\PER_n$ requires superpolynomial size noncommutative arithmetic
circuits remains an open problem.

Analogous to $\VP$ and $\VNP$, the classes $\vpnc$ and $\vnpnc$ can be
defined, as has been done by Hrubes et al \cite{HWY10b}. In
\cite{HWY10b} they have shown that $\PER_n$ is $\vnpnc$-complete w.r.t
projections (the Valiant \cite{Val79} notion which allows variables or
scalars to be substituted for variables).

The purpose of our paper is to study the structure of the classes
$\vpnc$ and $\vnpnc$ and its connections to formal language
classes. Our main results show a rich structure within $\vnpnc$ and
$\vpnc$ which nicely corresponds to properties of regular languages
and context-free languages.

\subsection{Main results of the paper}

We begin with some formal definitions needed to summarize our main results.
Detailed definitions are given in the next section.

\begin{definition}\label{pfamily}
A sequence $f= (f_n)$ of noncommutative multivariate polynomials over
a field $\F$ is called a \emph{polynomial family} (abbreviated as
p-family henceforth) if both the number of variables in $f_n$ and the
degree of $f_n$ are bounded by $n^c$ for some constant $c>0$.
\end{definition}

\begin{definition}\hfill{~}
\begin{itemize}
 \item The class $\vbpnc$ consists of p-families $f= (f_n)$
   such that each $f_n$ has an algebraic branching program (ABP) of
   size bounded by $n^b$ for some $b>0$ depending on $f$.
\item The class $\vpnc$ consists of p-families $f=(f_n)$ such
  that each $f_n$ has an arithmetic circuit of size bounded by $n^b$
  for some $b>0$ depending on $f$.
\item A p-family $f=(f_n)$ is in the class $\vnpnc$ if there exists a
  p-family $g=(g_n) \in \vpnc$ such that for some polynomial $p(n)$ 
\[  
f_n(x_1,\ldots,x_{q(n)})=\sum_{y_1,\ldots,y_r \in
  \{0,1\}}g_{p(n)}(x_1,\ldots,x_m,y_1,\ldots,y_r).
\]

where $q(n)$ is number of variables in $f_n$.
\end{itemize}
\end{definition}

We note that the class $\vbpnc$ is defined through polynomial-size
algebraic branching programs (ABPs) which intuitively correspond to
polynomial sized finite automata. In fact noncommutative ABPs are also
studied in the literature as multiplicity automata \cite{Beim}, and
Nisan's rank lower bound argument \cite{N91} is related to the rank of
Hankel matrices in formal language theory \cite{berstel}. Moreover, 
arithmetic circuits can be seen as acyclic context-free grammars (CFGs) (if the coefficients come from the
Boolean ring instead of a field).

It turns out, as we will see in this paper, the analogy goes further
and shows up in the internal structure of $\vnpnc$ and $\vpnc$.

\begin{enumerate}
 \item We prove that the Dyck polynomials are complete for $\vpnc$
   w.r.t $\leq_{abp}$ reductions. The proof is really an arithmetized
   version of the Chomsky-Sch\"utzenberger theorem \cite{cs63} showing
   that the Dyck languages are the hardest CFLs.
\item On the same lines we show that the Palindrome polynomials
  $PAL_n=\sum_{w \in \{x_0,x_1\}^n}w.w^R$ are complete for $\vsk$,
  again by adapting the proof of the Chomsky-Sch\"utzenberger theorem.
\item Within $\vpnc$ we obtain a proper hierarchy w.r.t
  $\leq_{abp}$-reductions corresponding to the Dyck polynomials of
  bounded nesting depth. This roughly corresponds to the
  noncommutative VNC hierarchy within $\vpnc$.
\item We examine the structure within $\vnpnc$ assuming the
  sum-of-squares conjecture, under which Hrubes et al \cite{HWY10a}
  have shown that the p-family $\{ ID_d=\sum_{w \in
      \{x_0,x_1\}^d}w.w\}_d \notin \vpnc$. We prove the following results about
    $\vnpnc\setminus \vpnc$.
\begin{enumerate}
\item We prove a transfer theorem which essentially shows that if $f$
  is a $\vnpnc$-complete p-family under projections then an
  appropriately defined commutative version $f^{(c)}$ of $f$ is
  complete under projections for the commutative VNP class.

 \item Assuming the sum-of-squares conjecture we show that the
   polynomial $ID_d$ is neither in $\vpnc$ nor $\vnpnc$-complete w.r.t
   $\leq_{proj}$ reductions. This is analogous to Ladner's well-known
   theorem \cite{Ladner75}.

\item It also turns out that under the sum-of-squares conjecture we
  have an infinite hierarchy w.r.t $\leq_{proj}$ reductions between
  $\vpnc$ and $\vnpnc$ and also incomparable p-families.
\end{enumerate}
\end{enumerate}

Table 1 summarizes the results in this paper.

\begin{table}[!htbp]\label{tabl}
\centering
\begin{tabular}{|c|c|c|}
  \hline
  \textbf{P-family} & \textbf{Complexity Result} & \textbf{Remarks}\\
  \hline
  \hline
 $D_k, k \geq 2$ & \begin{tabular}[t]{ll}
              - $\vpnc$-Complete (Theorem~\ref{thm:vpcomplete})\\
              - $\vsk$-hard (Theorem~\ref{pal-to-d2})
            \end{tabular} 
          & w.r.t\ $\abp$-reductions\\
\hline
  $\pal_d$ & $\vsk$-Complete (Theorem~\ref{pal-vsk}) & w.r.t.\ $\abp$-reductions\\
  \hline
 $ID_d$ &   \begin{tabular}[t]{ll}
               -not $\vnpnc$-Complete (Theorem~\ref{thm:notcomplete})\\
               -not $\vpnc$-hard (Theorem~\ref{id-not-hard})\\
               -not in $\vpnc$ \cite{HWY10a}
             \end{tabular}
       &     \begin{tabular}[t]{ll}
               $\leq_{proj},\leq_{iproj}$-reductions\\
               $\leq_{proj},\leq_{iproj}$-reductions\\
               assuming $SOS_k$ conjecture
             \end{tabular}\\
  \hline
 $\PER^{*,\chi}$ & $\vnpnc$-Complete (Theorem~\ref{thm:vnpc_gen})
       &     \begin{tabular}[t]{ll}
               w.r.t.\ $\abp$-reductions
             \end{tabular}\\

  \hline
 $ID^{*}_{n}$ & $\vnpnc$-Complete (Theorem~\ref{thm:more_id*})
       &     \begin{tabular}[t]{ll}
              w.r.t.\ $\abp$-reductions
             \end{tabular}\\

  \hline
\end{tabular}
 \caption{Summary of Results}
\end{table}

\section{Preliminaries}

A \emph{noncommutative arithmetic circuit} $C$ over a field $\F$ is a
directed acyclic graph such that each in-degree $0$ node of the graph
is labelled with an element from $X\cup \F$, where $X$ is the set of
indeterminates. Each internal node has fanin two and is labeled by
either ($+$) or ($\times$) -- meaning a $+$ or $\times$ gate,
respectively. Furthermore, each $\times$ gate has a designated left
child and a designated right child. Each gate of the circuit
inductively computes a polynomial in $\F\angle{X}$: the polynomials
computed at the input nodes are the labels; the polynomial computed at
a $+$ gate (resp. $\times$ gate) is the sum (resp.  product in
left-to-right order) of the polynomials computed at its children. The
circuit $C$ computes the polynomial at the designated output node.

A \emph{noncommutative algebraic branching program} ABP (\cite{N91},
\cite{raz05PIT}) is a layered directed acyclic graph (DAG) with one
in-degree zero vertex $s$ called the \emph{source}, and one out-degree
zero vertex $t$, called the \emph{sink}. The vertices of the DAG are
partitioned into layers $0,1,\ldots,d$, and edges go only from level
$i$ to level $i + 1$ for each $i$. The source is the only vertex at
level $0$ and the sink is the only vertex at level $d$. Each edge is
labeled with a linear form in the variables $X$.  The size of the ABP
is the number of vertices.

For any $s$-to-$t$ directed path $\gamma=e_1,e_2,\ldots,e_d$, where
$e_i$ is the edge from level $i-1$ to level $i$, let $\ell_i$ denote
the linear form labeling edge $e_i$. Let
$f_\gamma=\ell_1\cdot\ell_2\cdots\ell_d$ be the product of the linear
forms in that order. Then the ABP computes the degree $d$ polynomial
$f\in\F\angle{X}$ defined as
\[
f = \sum_{\gamma\in\mathcal{P}} f_\gamma,
\]
where $\mathcal{P}$ is the set of all directed paths from $s$ to $t$.

\subsection{Polynomials}

We now define some p-families that are important for the paper.\\

\noindent\textbf{Identity Polynomials:}\\

We define the p-family $ID=(ID_n)$ which corresponds to the familiar
context-sensitive language $\{ww\mid w\in\Sigma^*\}$. 
\[
ID_n=\sum_{w \in \{x_0,x_1\}^n}ww.
\]

We will also consider some variants of this p-family in the paper.\\

\noindent\textbf{Palindrome Polynomials:}\\

The p-family $\PAL=(\PAL_n)$ corresponds to the context-free language
of palindromes.
\[
\PAL_n=\sum_{w \in \{x_0,x_1\}^n}w.w^R.
\]

\noindent\textbf{Dyck Polynomials:}\\

Let $X_i=\{(_1,)_1,...,(_i,)_i\}$ for a fixed $i \in \mathbb{N}$.  We
define the polynomial $D_{i,n}$ over the variable set $X_i$ to be sum
of all strings in $X^{2n}_i$ which are well balanced (for all the $i$
bracket types). The $D_{i,n}$ are Dyck polynomials of degree $2n$ over
$i$ different parenthesis. The corresponding p-family is denoted
$D_i=(D_{i,n})$. 



\section{The Reducibilities}

In the paper we consider three different notions of reducibility for
our completeness results and for exploring the structure of the
classes $\vnpnc,\vpnc$ and $\vsk$.

\subsection*{The projection reducibility}

The projection is essentially Valiant's notion of reduction for which
he showed $\VNP$-completeness for $\PER_n$ and other p-families in his
seminal work \cite{Val79}. Let $f=(f_n)$ and $g=(g_n)$ be two
noncommutative p-families over a field $\F$, where $\forall n$ $f_n
\in \F\angle{X_n}$ and $g_n \in \F\angle{Y_n}$. We say
$f\leq_{proj}g$ if there are a polynomial $p(n)$ and a substitution
map $\phi:Y_{p(n)}\rightarrow X_{n}\cup \F$ such that $\forall n$
$f(X_n)=g(\phi(Y_{p(n)}))$.

As shown in \cite{HWY10b} by using Valiant's original proof, the
noncommutative $\PER_n$ p-family is $\vnpnc$-complete for
$\leq_{proj}$-reducibility. 

\subsection*{The indexed-projection reducibility}

The indexed-projection is specific to the noncommutative setting. We
say $f\leq_{iproj}g$ for p-families $f=(f_n)$ and $g=(g_n)$, where
$deg(f_n)=d_n$, $deg(g_n)=d'_n$, $f_n \in \F\angle{X_n}$, and $g_n \in
\F\angle{Y_n}$, if there are a polynomial $p(n)$ and indexed projection
map
\[
\phi:[d'_{p(n)}] \times Y_{p(n)}\rightarrow X_n \cup \F,
\]
such that on substituting $\phi(i,y)$ for variable $y\in Y_{p(n)}$ occurring
in the $i^{th}$ position in a monomial of $g_{p(n)}$ we get polynomial $f_n$. 

Clearly, $\leq_{iproj}$ is more powerful than $\leq_{proj}$ and we
will show separations in this section.

\subsection*{The abp-reducibility}

The $\leq_{abp}$ reducibility is the most general notion that we will
consider. It essentially amounts to matrix substitutions for
variables, where the matrices have scalar or variable (we allow even
constant-degree monomial) entries. In terms of complexity classes we
have: $\vbpnc\subsetneq \vsk\subsetneq \vpnc \subseteq \vnpnc$. And
$\leq_{abp}$-reductions correspond to the computational power of the
class $\vbpnc$.

Formally, let $f_n \in \F\angle{X_n}$ and $g_n \in \F\angle{Y_n}$ as
before. We say $f\leq_{abp}g$ if there are polynomials $p(n),q(n)$ and
the substitution map $\phi:Y_{p(n)}\rightarrow M_{q(n)}(X_n\cup \F)$
where $M_{q(n)}(X_n\cup \F)$ stands for $q(n)\times q(n)$ matrices
with entries from $X_n \cup \F$, with the property that $f(X_n)$ is
the $(1,q(n))$-th entry of $g(\phi(Y_{p(n)}))$.

\begin{proposition}\label{transitive-abp}
Let $f,g,h \in \mathbb{F}\langle X \rangle$ such that $f \leq_{abp} g$
and $g \leq_{abp} h$ then $f \leq_{abp} h$.
\end{proposition}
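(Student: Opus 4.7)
The plan is to compose the two given matrix substitutions into a single one via block substitution. Suppose $f \leq_{abp} g$ is witnessed by $\phi: Y_{p(n)} \to M_a(X_n \cup \F)$ with $a = q(n)$, so $f(X_n) = (g(\phi(Y_{p(n)})))_{1,a}$; and $g \leq_{abp} h$ is witnessed by $\psi: Z_{p'(m)} \to M_b(Y_m \cup \F)$ with $m = p(n)$, $b = q'(m)$, so $g(Y_m) = (h(\psi(Z_{p'(m)})))_{1,b}$. For each $z \in Z_{p'(m)}$ I define $\hat{\psi}(z) \in M_{ab}(X_n \cup \F)$ by viewing $\psi(z)$ as a $b \times b$ matrix and replacing each variable entry $y$ by the $a \times a$ block $\phi(y)$, and each scalar entry $c \in \F$ by $c \cdot I_a$. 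The goal is then to show $f(X_n) = (h(\hat{\psi}(Z_{p'(m)})))_{1, ab}$, which, since both $ab = q(n) q'(p(n))$ and $p'(p(n))$ are polynomial in $n$, gives $f \leq_{abp} h$.

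The key observation is that the assignment $y \mapsto \phi(y)$ extends to a ring homomorphism $\Phi: \F\angle{Y_m} \to M_a(\F\angle{X_n})$, and noncommutative polynomial evaluation is natural under ring homomorphisms. Applying $\Phi$ entrywise to the identity $g(Y_m) = (h(\psi(Z_{p'(m)})))_{1,b}$ turns the left-hand side into $g(\phi(Y_{p(n)}))$. The right-hand side becomes the $(1,b)$-entry of a matrix in $M_b(M_a(\F\angle{X_n}))$, obtained by evaluating $h$ on the matrices $\Phi(\psi(z))$, each of which is precisely $\hat{\psi}(z)$ viewed as a block matrix. Under the standard block-matrix ring isomorphism $M_b(M_a(R)) \cong M_{ab}(R)$ with $R = \F\angle{X_n}$, this $(1,b)$-entry becomes the $(1,b)$-block of $h(\hat{\psi}(Z))$. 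Extracting the $(1,a)$-entry of that block recovers $f(X_n)$ as the $(1, ab)$-entry of $h(\hat{\psi}(Z))$, as desired.

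The one step requiring care is the compatibility of noncommutative polynomial evaluation with the block-matrix isomorphism $M_b(M_a(R)) \cong M_{ab}(R)$. Because the multiplication order at $\times$-gates in $h$ is fixed left-to-right, and the block isomorphism is a genuine ring isomorphism (block multiplication in the prescribed order agrees with ordinary $ab \times ab$ multiplication), this compatibility is immediate; but it must be stated explicitly, since a careless handling of the left-to-right convention would break the identification. With that in hand, the remainder is a routine unwinding of the definitions, and I expect no further obstacle.
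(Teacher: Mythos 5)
Your argument is correct: composing the two substitutions by replacing each variable entry of $\psi(z)$ with the block $\phi(y)$ (and each scalar $c$ with $c\cdot I_a$), and then invoking the ring isomorphism $M_b(M_a(R))\cong M_{ab}(R)$ together with the fact that evaluation of a noncommutative polynomial at matrices is the unique $\F$-algebra homomorphism extending the substitution, is exactly the intended reasoning, and you correctly track that the $(1,a)$-entry of the $(1,b)$-block is the $(1,ab)$-entry and that $ab=q(n)q'(p(n))$ stays polynomially bounded. The paper states this proposition without proof, treating it as routine, so your write-up simply supplies the standard argument it leaves implicit.
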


\begin{proposition}
Let $f,g \in \mathbb{F}\langle X \rangle$ and $f \leq_{abp} g$. Then
if $g$ has polynomial size ABP or a noncommutative arithmetic circuit
or a noncommutative skew circuit then $f$ has polynomial size ABP, a
noncommutative arithmetic circuit, a noncommutative skew circuit
respectively.
\end{proposition}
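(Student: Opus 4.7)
The plan is to prove each of the three implications by simulating the matrix substitution $\phi: Y_{p(n)} \to M_{q(n)}(X_n \cup \F)$ inside the same computational model at the cost of a polynomial blow-up in size. Fix $n$ and write $p = p(n)$ and $q = q(n)$. The invariant to maintain is that, at each ``corresponding'' point in the simulated device, the object computed represents the $q \times q$ matrix that the device for $g_p$ would evaluate to at that point under $\phi$; the output $f$ is then read off as the $(1,q)$-entry of the top-most matrix.

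For the ABP case, I would take an ABP for $g_p$ with levels $0, 1, \ldots, d$ and replace every vertex $v$ by $q$ copies $v_1, \ldots, v_q$. For each edge from $u$ at level $i$ to $v$ at level $i+1$ labeled by a linear form $\ell(Y_p)$, the substituted expression $\ell(\phi(Y_p))$ is a $q \times q$ matrix whose entries are linear forms over $X_n \cup \F$; I would add, for each pair $(a,b)$, an edge from $u_a$ to $v_b$ labeled by the $(a,b)$-entry of this matrix. Designating $s_1$ as the new source and $t_q$ as the new sink, a straightforward induction on levels shows that the sum of path weights from $s_1$ to $t_q$ equals the $(1,q)$-entry of $g_p(\phi(Y_p))$, which is $f$. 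The size grows by a factor of $q$, so it is polynomial.

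For general arithmetic circuits I would replace every gate $v$ by $q^2$ gates, one computing each entry of the $q \times q$ matrix that $v$ represents under $\phi$. Leaves labeled by $y \in Y_p$ are replaced by the $q^2$ scalar/variable entries of $\phi(y)$. Addition gates are simulated entrywise. Multiplication gates simulate $q \times q$ matrix multiplication in the correct left-to-right order, using $q^3$ multiplications together with the summing additions for $(MN)_{ij} = \sum_k M_{ik} N_{kj}$. The new output gate is the one computing the $(1,q)$-entry at the original output. The blow-up is $O(q^3)$ per gate.

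The main subtlety is the skew-circuit case, because the simulation must itself be skew. Consider an original multiplication gate $v = g_1 \times y$ with $y$ a leaf. In the simulation the $(i,j)$-entry of $v$ becomes $\sum_k (G_1)_{ik} \cdot \phi(y)_{kj}$, and each of these multiplications has $\phi(y)_{kj} \in X_n \cup \F$ as its right factor, so the multiplication remains skew. The symmetric case $y \times g_2$ is analogous, and multiplications whose output is purely a polynomial product of leaves (scalars/variables from $\phi(y_1)\cdot\phi(y_2)$) are also trivially skew. Hence skewness is preserved across the simulation, and the simulated circuit is a polynomial-size skew circuit computing $f$.
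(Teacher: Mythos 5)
Your proof is correct: the paper states this proposition without proof, and your entrywise matrix-simulation argument (with the skewness check for the $\sum_k (G_1)_{ik}\cdot\phi(y)_{kj}$ products) is exactly the standard construction the authors implicitly rely on throughout, e.g.\ when reading $f$ off the $(1,q)$-entry after substituting the automaton matrices. No gaps.
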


\subsection{Hadamard product of polynomials}

We describe ideas from \cite{AJS09} that are useful for the present
paper in connection with showing $\leq_{abp}$ reductions between
p-families. Consider an ABP $P$ computing a noncommutative polynomial
$g\in\F\angle{X}$. Suppose the ABP $P$ has $q$
nodes with source $s$ and and sink $t$.

For each variable $x\in X$ we define a $q\times q$ matrix $M_x$, whose
$(i,j)^{th}$ entry $M_x(i,j)$ is the coefficient of variable $x$ in
the linear form labeling the directed edge $(i,j)$ in the ABP
$P$.\footnote{If $(i,j)$ is not an edge in the ABP then the
  coefficient of $x$ is taken as $0$.}

Consider a degree $d$ polynomial $f\in\F\angle{X}$, where
$X=\{x_1,\cdots,x_n\}$. For each monomial $w= x_{j_1}\cdots x_{j_k}$
we define the corresponding matrix product $M_w=M_{x_{j_1}}\cdots
M_{x_{j_k}}$. When each indeterminate $x\in X$ is substituted by the
corresponding matrix $M_x$ then the polynomial $f\in\F\angle{X}$
evaluates to the matrix
\[
\sum_{f(w)\ne 0} f(w)M_w,
\]
where $f(w)$ is the coefficient of monomial $w$ in the polynomial $f$.


\begin{theorem}{\rm\cite{AJS09}}
 Let $C$ be a noncommutative arithmetic circuit computing a polynomial
 $f\in \F\langle{x_1, x_2, \ldots, x_n}\rangle$. Let $P$ be an ABP (with $q$ nodes, source node
 $s$ and sink node $t$) computing a polynomial $g\in\F\langle{x_1, x_2, \ldots, x_n}\rangle$.
 Then the $(s,t)^{th}$ entry of the matrix
 $f(M_{x_1},M_{x_2},\ldots,M_{x_n})$ is the polynomial 
\[
\sum_{w}f(w)g(w)w.
\]
where $f(w),g(w)$ are coefficients of monomial $w$ in $f$ and $g$ respectively.
Hence there is a circuit of size polynomial in $n$, size of $C$ and
size of $P$ that computes the Hadamard product polynomial
$\sum_{w}f(w)g(w)w$.
\end{theorem}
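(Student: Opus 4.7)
My plan is to substitute each variable $x_i$ by the matrix $\tilde M_{x_i} := x_i M_{x_i}$ in the circuit $C$, so every variable becomes a $q\times q$ matrix whose entries are scalar multiples of that single indeterminate, and then to read off the $(s,t)$-entry of the resulting matrix over $\F\angle{X}$. What makes this work is that every indeterminate $x_i$ commutes with the scalar entries of every $M_{x_j}$, even though the $x_i$'s do not commute with one another.

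The key lemma I would establish is: for every monomial $w = x_{j_1}\cdots x_{j_k}$, the scalar matrix product $M_w := M_{x_{j_1}}\cdots M_{x_{j_k}}$ satisfies $(M_w)_{s,t} = g(w)$, the coefficient of $w$ in the ABP polynomial $g$. This follows by expanding $(M_w)_{s,t}$ as a sum over node-sequences $s=v_0\to v_1\to\cdots\to v_k=t$: because $P$ is layered, the sum vanishes unless $k=d$ and the sequence is an honest $s$-to-$t$ path $\gamma$, and in that case the product of edge coefficients along $\gamma$ is, by the very definition of $M_x$, exactly the coefficient of $w$ in the path polynomial $\ell_1\cdots\ell_d$; summing over paths $\gamma$ gives $g(w)$.

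Given the lemma, I would slide each $x_{j_\ell}$ past the scalar matrix entries (this is where scalar/variable commutation is crucial) to compute $\tilde M_w = (x_{j_1}M_{x_{j_1}})\cdots(x_{j_k}M_{x_{j_k}}) = w\cdot M_w$. Writing $f=\sum_w f(w)w$ and using that substitution is a homomorphism of the free algebra into the matrix ring over $\F\angle{X}$, we then get
\[
\bigl(f(\tilde M_{x_1},\ldots,\tilde M_{x_n})\bigr)_{s,t} \;=\; \sum_w f(w)\cdot w \cdot (M_w)_{s,t} \;=\; \sum_w f(w)g(w)w.
\]
The circuit-size claim follows by simulating $C$ gate-by-gate on $q\times q$ matrices: each $+$ gate expands to $q^2$ scalar additions, each $\times$ gate to $O(q^3)$ scalar operations, and we return one designated entry, yielding a circuit of size polynomial in $|C|$, $q$, and $n$.

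The only real subtlety, rather than a genuine obstacle, is keeping the two layers of (non)commutativity straight: the $M_{x_i}$'s multiply inside an ordinary matrix algebra over $\F$, while the $x_i$'s themselves never commute. Once the substitution $x_i \mapsto x_i M_{x_i}$ is chosen and that bookkeeping is isolated, everything else is routine linearity; in particular, the naive substitution $x_i \mapsto M_{x_i}$ would only give the scalar $\sum_w f(w)g(w)$ in the $(s,t)$-entry, not the polynomial, which is why multiplying by $x_i$ to tag the matrix factors is essential.
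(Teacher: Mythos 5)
Your proof is correct, and it is the standard argument behind this result (which the paper itself only cites from [AJS09] without reproducing a proof): the key lemma that $(M_w)_{s,t}=g(w)$ via the path expansion in the layered ABP, followed by the homomorphism property of matrix substitution and a gate-by-gate simulation for the size bound, is exactly the intended construction. You also correctly diagnosed the one imprecision in the statement as written: with $M_x$ having purely scalar entries, the $(s,t)$ entry of $f(M_{x_1},\ldots,M_{x_n})$ would be the scalar $\sum_w f(w)g(w)$, so the substitution must be $x_i\mapsto x_iM_{x_i}$ (equivalently, the matrix entries must be coefficient times variable) to recover the Hadamard product polynomial $\sum_w f(w)g(w)w$; your handling of the scalar-versus-variable commutation in pulling the $x_{j_\ell}$'s to the left is the right way to justify this.
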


\begin{remark}
A specific case of interest is when the ABP $P$ is a deterministic
finite automaton with start state $s$ and sink $t$. In that case the
polynomial $g$ is the sum of all monomials that are accepted by the
automaton (since it is acyclic, it accepts only finitely many). Let
$W$ denote the set of monomials accepted by the automaton $P$.  Then
the $(s,t)^{th}$ entry of the matrix
$f(M_{x_1},M_{x_2},\ldots,M_{x_n})$ is the polynomial
\[
\sum_{w\in W}f(w)w.
\]
\end{remark}

\begin{remark}
It is useful to combine the construction described in the previous
remark with \emph{substitution maps}. As above, let the ABP $P$ be a
deterministic finite substitution automaton with $q$ states accepting
monomials of degree at most $d$ over variables $X$ with start state
$s$ and accept state $t$. The substitutions are defined as follows:

For $1\le i,j\le q$, $\psi_{ij}:X\to Y^*$ is a substitution map
mapping variables in $X$ to monomials over $Y$, where $q$ is the
number of nodes in the ABP $P$. For each $x\in X$ define the matrix
$M'_x$ as follows:
\[
M'_x(i,j)= \psi_{ij}(x), 1\le i,j\le q.
\]

For every monomial $w=x_{j_1}x_{j_2}\ldots x_{j_d}$ accepted by $P$,
there is a unique $s$-to-$t$ path
$\gamma=(s,i_1),(i_1,i_2),\ldots,(i_{d-1},t)$ along which it
accepts. This defines the substitution map $\psi$:
\[
\psi(w)=\psi_{s,i_1}(x_{j_1})\psi_{i_1,i_2}(x_{j_2})\ldots
\psi_{i_{d-1},t}(x_{j_d})
\]
so that $\psi(w)\in Y^*$.

Let $W$ denote the set of monomials accepted by the automaton $P$.
Then the $(s,t)^{th}$ entry of the matrix
$f(M'_{x_1},M'_{x_2},\ldots,M'_{x_n})$ is the polynomial
\[
\sum_{w\in W}f(w)\psi(w).
\]

{From} the above considerations it is clear that if $f\in\F\angle{X}$
has a polynomial-size circuit and $P$ is a polynomial-size automaton
then $\sum_{w\in W}f(w)\psi(w)$ has a polynomial-size circuit.
\end{remark}

\subsection*{Comparing the reducibilities}

\begin{proposition}For noncommutative p-families $f=(f_n)$ and $g=(g_n)$ we have, 
 \begin{enumerate}
  \item $f\leq_{proj}g\Rightarrow f\leq_{iproj}g$ 
  \item $f\leq_{iproj}g\Rightarrow f\leq_{abp}g$ 
 \end{enumerate}

\end{proposition}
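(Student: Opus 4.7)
The plan is to prove each implication by constructing the stronger reduction from the weaker one. Part (1) will be essentially by definition; part (2) will require a small matrix gadget that encodes the positional index into a matrix dimension.

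For part (1), given a projection $\phi: Y_{p(n)} \to X_n \cup \F$ witnessing $f \leq_{proj} g$, I define the indexed projection $\phi'(i, y) := \phi(y)$, which is constant in $i$. Substituting $\phi'(i, y)$ at position $i$ in any monomial of $g_{p(n)}$ coincides with substituting $\phi(y)$ for every occurrence of $y$, so the result is $f_n$ and $f \leq_{iproj} g$.

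For part (2), let $d := d'_{p(n)}$ and let $\phi: [d] \times Y_{p(n)} \to X_n \cup \F$ witness $f \leq_{iproj} g$. The plan is to encode the position counter into a $(d+1)$-dimensional matrix state. For each $y \in Y_{p(n)}$, I would define the $(d+1) \times (d+1)$ matrix $M_y$ by
\[
M_y[i, i+1] := \phi(i, y) \text{ for } 1 \le i \le d, \qquad M_y[i, d+1] := \phi(i, y) \text{ for } 1 \le i \le d-1,
\]
with all other entries zero. Viewing the rows/columns as states, this gives a transition graph with advance edges $i \to i+1$ and jump edges $i \to d+1$; state $d+1$ is absorbing. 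A short path-counting argument then shows that for any monomial $w = y_{j_1} \cdots y_{j_k}$ with $1 \le k \le d$, the unique length-$k$ path from state $1$ to state $d+1$ is $1 \to 2 \to \cdots \to k \to d+1$, giving
\[
M_w[1, d+1] = \phi(1, y_{j_1}) \phi(2, y_{j_2}) \cdots \phi(k, y_{j_k}) = \tilde{w},
\]
the image of $w$ under the indexed projection. Summing over the monomials of $g_{p(n)}$ then yields $[g_{p(n)}(\{M_y\}_{y \in Y_{p(n)}})]_{1, d+1} = f_n$, so $f \leq_{abp} g$ with $q(n) = d + 1$.

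The key step to verify is the uniqueness of that length-$k$ path: once the process reaches $d+1$ it is absorbed, so the path must advance deterministically through $1, 2, \ldots, k$ before making its single final step to $d+1$ (an advance when $k = d$, a jump when $k < d$). The only nuisance is that the empty monomial gives $M_w = I$, whose $(1, d+1)$-entry is zero, so a nonzero constant term of $g_{p(n)}$ is not transferred by this substitution; this is cosmetic and can be handled by assuming $g_{p(n)}(0) = 0$ (as in the paper's applications) or by a small tweak that writes the constant into entry $(1, d+1)$ separately.
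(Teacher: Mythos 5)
Your proof is correct; the paper states this proposition without any proof, treating it as immediate, and your argument for part (2) --- a $(d+1)$-state layered automaton whose advance/jump edges carry the values $\phi(i,y)$, with uniqueness of the length-$k$ path from state $1$ to state $d+1$ --- is exactly the standard construction the authors have in mind (it is the same position-tracking DFA-to-matrix device they use repeatedly, e.g.\ in the Hadamard-product remarks and in Theorem~\ref{pal-to-d2}). Your observation about the constant term is a genuine (if degenerate) corner case of the $\leq_{abp}$ definition, since $g(\phi(Y))$ contributes the constant only on the diagonal, and your proposed handling of it is appropriate.
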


\begin{theorem}
 There are noncommutative p-families $f=(f_n)$ and $g=(g_n)$ such that
 $g\leq_{abp}f$ but $f \nleq_{iproj}g$ and $g \nleq_{iproj}f$.
\end{theorem}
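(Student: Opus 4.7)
The plan is to exhibit the simple p-families $f = (f_n)$ with $f_n := (x_1 + x_2)^n = \sum_{w \in \{x_1, x_2\}^n} w$ and $g = (g_n)$ with $g_n := y_1^n + y_2^n$. Both families have two variables and degree $n$, so both are valid p-families, and I will verify each of the three required properties in turn.

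For $g \leq_{abp} f$, I would construct $4 \times 4$ matrices $M_1, M_2$ with entries in $\{0, y_1, y_2\}$ so that $(M_1 + M_2)^n$ has $(1, 4)$-entry equal to $y_1^n + y_2^n$. Concretely, take $M := M_1 + M_2$ to be the weighted adjacency matrix of a four-node graph consisting of a $y_1$-self-loop at node $2$ with entry edge $1 \to 2$ and exit edge $2 \to 4$ both labeled $y_1$, a symmetric $y_2$-track through node $3$, together with direct edges $1 \to 4$ labeled $y_1$ and $y_2$ (to handle the base case $n = 1$); a routine induction on $n$ gives $(M^n)_{1, 4} = y_1^n + y_2^n$. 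Splitting the $y_1$-entries of $M$ into $M_1$ and the $y_2$-entries into $M_2$ completes the reduction.

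For $g \nleq_{iproj} f$, suppose for contradiction that an iproj reduction exists with $\phi : [p(n)] \times \{x_1, x_2\} \to \{y_1, y_2\} \cup \F$. Using the identity $\sum_{w \in \{x_1, x_2\}^{p(n)}} \prod_i \phi(i, w_i) = \prod_{i = 1}^{p(n)} (\phi(i, x_1) + \phi(i, x_2))$, the image of $f_{p(n)}$ equals $\prod_i L_i$ for linear forms $L_i = a_i y_1 + b_i y_2 + c_i$. Pulling the scalar factors out and extracting the degree-$n$ homogeneous component reduces the problem to the algebraic claim that $y_1^n + y_2^n$ is, up to a nonzero scalar, a product of $n$ homogeneous linear forms in $\{y_1, y_2\}$. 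But this fails for $n \geq 2$: matching the coefficients of $y_1^n$ and $y_2^n$ in such a product forces every factor to have both $y_1$- and $y_2$-coefficients nonzero, which then forces the coefficient of any mixed monomial such as $y_1^{k - 1} y_2 y_1^{n - k}$ to be nonzero, contradicting that $g_n$ has no mixed monomials.

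For $f \nleq_{iproj} g$, any iproj substitution sends each of the two monomials $y_1^{p(n)}, y_2^{p(n)}$ of $g_{p(n)}$ to a single monomial (possibly times a scalar), so the image is supported on at most two distinct monomials; but $f_n$ has $2^n > 2$ monomials for $n \geq 2$, a contradiction. The main obstacle is the factorization argument in the middle step -- establishing the non-existence of a product-of-linear-forms expression for $y_1^n + y_2^n$ -- while the matrix construction of the first step and the monomial counting in the last step are both direct.
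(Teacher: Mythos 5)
Your proposal is correct and takes essentially the same route as the paper's proof, which uses the analogous pair $f_n=\prod_{i\in[n]}(x_i+y_i)$ and $g_n=x_1x_2\cdots x_n+y_1y_2\cdots y_n$: a small automaton realized as a matrix substitution for $g\leq_{abp}f$, monomial counting ($2^n$ versus $2$) for $f\nleq_{iproj}g$, and a ``product of linear forms versus non-factorable'' obstruction for $g\nleq_{iproj}f$. The only real difference is that you establish the obstruction for the last step directly, by comparing the coefficients of $y_1^n$, $y_2^n$ and a mixed word in a would-be product of $n$ homogeneous linear forms, whereas the paper simply asserts the irreducibility of $g_n$; both arguments are sound.
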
 

\begin{proof}
 We define the p-families as follows: $g_n,f_n \in
 \F\angle{x_1,x_2,\dots,x_n,y_1,\ldots,y_n}$ where $f_n=\prod_{i \in
   [n]}(x_i+y_i)$ and $g_n=x_1x_2\ldots x_n+y_1y_2\ldots y_n$. A key
 fact which is easy to check is that $g_n$ is irreducible for all $n$,
 and $f_n$ is a product of linear forms obviously. More crucially,
 $g_n$ has only two monomials for all $n$, whereas $f_n$ has $2^n$
 nonzero monomials.

Now, if $f \leq_{iproj} g$ then for some polynomial $p(n)$ and
substitution map $\phi$ we will have $g(\phi(X_{p(n)}))=f(X_n)$ where
$X_n=\{x_1,\ldots,x_n,y_1,\ldots,y_n\}$ and
$X_{p(n)}=\{x_1,\ldots,x_{p(n)},y_1,\ldots,y_{p(n)}\}$. However, the
substitution map cannot increase the number of monomials in
$g(\phi(X_{p(n)}))$ whereas $f(X_{n})$ has $2^n$ monomials. Hence $f
\nleq_{iproj}g$.

Also, $g \nleq_{iproj}f$ because for all $n$, $g_n$ is irreducible and
$f_n$ is a product of linear forms over $\F$.

Now, we claim $g \leq_{abp}f$, where the abp-reduction is defined by
the following matrix substitutions which are given by the following
DFA with start state $s$ and final state $t$:

\begin{itemize}
 \item In start state $s$, reading $x_1$ go to state $1$ and reading $y_1$ go to state $1'$.
 \item In state $i$, reading $x_{i+1}$ go to state $i+1$, $i<n-1$.
 \item In state $i'$, reading $y_{i+1}$ go to state $(i+1)'$, $i'<n-1$.
\item In state $n-1$, reading $x_{n}$ go to state $t$.
\item In state $(n-1)'$, reading $y_{n}$ go to state $t$.
\end{itemize}

For each variable in $\{x_1,\ldots,x_n,y_1,\ldots,y_n\}$ we substitute
matrices of dimension $2n \times 2n$, corresponding to the above DFA,
in the polynomial $f$ to obtain polynomial $g$.
\end{proof}

\begin{theorem}
 There are p-families $f$ and $g$ s.t $f \leq_{iproj} g$ but $f
 \nleq_{proj} g$.
\end{theorem}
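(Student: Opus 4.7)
The plan is to exploit the essential difference between the two reducibilities: an indexed projection may substitute \emph{different} values for a single variable depending on the position at which it occurs in a monomial, whereas an ordinary projection is forced to act position-independently. So the strategy is to take a $g$ built from very few variables (so that projections have very few options), while the target $f$ mentions many distinct variables that can only be produced by a position-dependent substitution.

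Concretely, I would take $g_n = x^n$ as a p-family in the single variable $x$, and $f_n = x_1 x_2\cdots x_n$ as a p-family over $\{x_1,\ldots,x_n\}$. Both sequences trivially satisfy Definition~\ref{pfamily}: the number of variables ($1$ for $g_n$, $n$ for $f_n$) and the degree ($n$ in both cases) are polynomially bounded. For the upper bound $f\leq_{iproj}g$, set $p(n)=n$ and define the indexed-projection $\phi\colon [n]\times\{x\}\to\{x_1,\ldots,x_n\}$ by $\phi(i,x)=x_i$. Applied to the unique monomial $x^n$ of $g_n$, the $i$-th occurrence of $x$ is replaced by $x_i$, producing exactly $x_1 x_2\cdots x_n = f_n$.

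For the non-reducibility $f\nleq_{proj}g$, I would argue directly. Suppose for contradiction that there is a polynomial $p(n)$ and a substitution $\phi\colon\{x\}\to\{x_1,\ldots,x_n\}\cup\F$ with $g_{p(n)}(\phi(x)) = f_n$. Then $g_{p(n)}(\phi(x)) = \phi(x)^{p(n)}$, which is one of two things: either $x_i^{p(n)}$ (when $\phi(x)=x_i$ for some $i$) or a scalar $c^{p(n)}\in\F$ (when $\phi(x)=c$). Neither of these equals the multilinear monomial $x_1 x_2\cdots x_n$ once $n\geq 2$, yielding the desired contradiction.

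There is essentially no technical obstacle here; the separation is structural rather than combinatorial. The only point worth double-checking is that $g_n = x^n$, which uses only a single variable across the entire family, is nevertheless a legitimate p-family, since the definition merely requires the number of variables to be at most $n^c$ for some $c$, which is satisfied (trivially) by $c=0$.
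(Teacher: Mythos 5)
Your proposal is correct and rests on exactly the same idea as the paper's proof: choose a $g$ with very few variables, an $f$ with many, and observe that an ordinary projection cannot increase the number of distinct variables while an indexed projection can do so via position-dependent substitution (the paper uses $g_n=\prod_{i\in[n]}(z_0+z_1)$ and $f_n=\prod_{i\in[n]}(x_i+y_i)$, whereas you use the even simpler pair $g_n=x^n$, $f_n=x_1\cdots x_n$). Your explicit case analysis of $\phi(x)^{p(n)}$ is a clean, complete instantiation of the same argument, so no changes are needed.
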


\begin{proof}
 Let $f=\prod_{i \in [n]}(x_i+y_i)$ and $g=\prod_{i \in
   [n]}(z_0+z_1)$. Clearly, $f \leq_{iproj} g$ where the indexed
 projection will substitute $x_i$ for $z_0$ and $y_i$ for $z_1$ in the
 $i$-th linear factor $(z_0+z_1)$ of $g$. However, the usual
 $\leq_{proj}$ reduction cannot increase the number of variables in
 $g$ from two. Hence $f \nleq_{proj} g$.
\end{proof}

\section{Dyck Polynomials are $\vpnc$-complete}

\subsection{$\vpnc$-Completeness}

In this section we exhibit a \emph{natural} p-family which is
$\leq_{abp}$-complete for the complexity class $\vpnc$. We show that
any homogeneous degree $d$ polynomial $f\in \mathbb{F}\langle x_1,
x_2, \ldots, x_n \rangle$ computed by a non-commutative arithmetic
circuit of size $poly(n,d)$ is $abp$-reducible to the polynomials
$D_k$ for $k \geq 2$, where $D_k$ refers to the Dyck polynomial over
$k$ different types of brackets. Our main Theorem in this section can
be seen as an algebraic analogue of the Chomsky-Sch\"utzenberger
representation theorem \cite{cs63} (also see \cite[pg. 306]{Davis94}),
which says that every context-free language is a homomorphic image of
intersection of a language of balanced parenthesis strings over
suitable number of different types of parentheses and a regular
language.  More precisely,

\begin{theorem}[Chomsky-Sch\"utzenberger] 
A language $L$ over alphabet $\Sigma$ is context free iff there exist
\begin{enumerate}
\item a matched alphabet $P \cup \overline{P}$ ($P$ is set of $k$
  different types of opening parentheses $P= \{(_1, (_2, \ldots, (_k
  \}$ and $\overline{P}$ is the corresponding set of matched closing
  parentheses $\overline{P}= \{)_1, )_2, \ldots, )_k \}$),
\item a regular language $R$ over $P \cup \overline{P}$,
\item and a homomorphism $h :(P \cup \overline{P} )^* \mapsto \Sigma^*$ 
\end{enumerate}
such that $L= h(D \cap R)$, where $D$ is the set of all balanced parentheses strings over $P \cup \overline{P}$.
\end{theorem}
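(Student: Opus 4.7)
The plan is to prove both directions separately; the ``if'' direction is essentially a closure-properties exercise, while the ``only if'' direction requires an explicit construction. For the easy direction, the Dyck language $D$ over any matched alphabet is well known to be context-free, the class of context-free languages is closed under intersection with regular languages, and it is closed under homomorphisms, so $h(D \cap R)$ is always context-free.

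For the forward direction, I would first convert any CFG for $L$ into an equivalent grammar $G = (V, \Sigma, P_G, S)$ in Chomsky normal form, in which every production is either $A \to BC$ or $A \to a$. I would assign to each production $\pi \in P_G$ its own matched pair of brackets $(_\pi$ and $)_\pi$, setting $k = |P_G|$, which gives the bracket alphabet $P \cup \overline{P}$. A parse tree $T$ of $G$ is then encoded by a depth-first walk of $T$: emit $(_\pi$ on entering a node labelled by production $\pi$ and $)_\pi$ on leaving it. The resulting string $\mathrm{code}(T)$ is automatically well-balanced and hence lies in $D$. Define the homomorphism $h$ so that $(_\pi \mapsto a$ if $\pi$ is of the form $A \to a$ and $(_\pi \mapsto \varepsilon$ otherwise, while every closing bracket is sent to $\varepsilon$. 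Then $h(\mathrm{code}(T))$ is precisely the terminal frontier of $T$.

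The main technical content is to design a regular language $R$ that filters out precisely the strings in $D$ which are codes of parse trees of $G$. The key observation I would exploit is that being a valid code is a purely \emph{local} property of the bracket sequence: whenever a bracket $(_\pi$ with $\pi : A \to BC$ is opened, the next two inner brackets must open productions whose left-hand sides are $B$ and $C$ respectively; whenever a bracket $(_\pi$ with $\pi : A \to a$ is opened, it must close immediately; and the outermost bracket must open a production with LHS $S$. Each such constraint looks only at a constant-size window of adjacent symbols, so a finite automaton of size polynomial in $|V| + |P_G|$ can enforce them all simultaneously. The step I expect to be most subtle is establishing the two-way correspondence: every $w \in D \cap R$ must decompose uniquely, using the matched-bracket structure of $D$, into a tree of sub-codes, each of which the conditions in $R$ force to be a legal parse tree of $G$; and conversely every parse tree of $G$ yields some $w \in D \cap R$ with $h(w)$ its frontier. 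Both implications go through by induction on nesting depth, with the unambiguous bracket matching guaranteed by $D$ doing the essential work of carving the string back into subtrees.
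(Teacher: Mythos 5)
Your easy direction and your tree-encoding are fine, but the construction of $R$ --- the step you yourself flag as most subtle --- contains a genuine gap, not merely a presentational one. With one bracket pair per production, your condition ``after $(_\pi$ with $\pi:A\to BC$, the \emph{second} inner bracket opens a production with left-hand side $C$'' is not a constant-window condition on adjacent symbols: the second child's opening bracket is separated from $(_\pi$ by the entire first subtree, and locating it requires matching brackets, i.e.\ a stack. So either $R$ as described is not regular, or you must weaken the condition to constraints on adjacent symbol pairs --- and then $h(D\cap R)=L$ fails. Concretely, take the CNF grammar with productions $\pi_1:S\to AB$, $\pi_2:A\to a$, $\pi_3:B\to b$, $\pi_4:D\to AC$, $\pi_5:C\to c$, $\pi_6:E\to CB$, $\pi_7:S\to DE$, so that $L=\{ab,\,accb\}$. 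The string $w=(_{\pi_1}(_{\pi_2})_{\pi_2}(_{\pi_5})_{\pi_5}(_{\pi_3})_{\pi_3})_{\pi_1}$ is balanced and begins and ends with an $S$-production, and every length-two factor of $w$ already occurs in the code of a genuine parse tree: $)_{\pi_2}(_{\pi_5}$ occurs inside the code of a $D\to AC$ subtree and $)_{\pi_5}(_{\pi_3}$ inside the code of an $E\to CB$ subtree (both arise under $\pi_7$). Hence $w$ survives any $R$ built from adjacent-pair constraints (and constant-size windows fare no better once the subtrees are deepened), yet $h(w)=acb\notin L$. The failure is precisely that, in your encoding, the seam between two sibling subtrees carries no information identifying their common parent.

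The standard repair is to change the encoding so that the parent's name appears at that seam. Give each \emph{occurrence} of a nonterminal on a right-hand side its own bracket pair, encoding a node with production $\pi:A\to BC$ as $(_{\pi,1}\,\mathrm{code}(T_B)\,)_{\pi,1}\,(_{\pi,2}\,\mathrm{code}(T_C)\,)_{\pi,2}$. Then the only closing--opening factors $R$ permits are of the form $)_{\pi,1}(_{\pi,2}$ with the \emph{same} $\pi$, every constraint genuinely becomes an adjacent-pair constraint, and your induction on nesting depth (using the unique matching supplied by $D$) goes through. This is exactly the device used in the arithmetized version in the proof of Theorem~\ref{thm:vpcomplete}, where a product gate $f=gh$ is rewritten as $(_f\,g\,)_f\,h$ so that the parent's closing bracket sits between the two children. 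Finally, remember to treat $\varepsilon\in L$ separately, since Chomsky normal form does not generate it; that part is routine.
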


We now show that the p-family $\{D_{k,d}\}_{d\geq0}$ is
$\vpnc$-complete for $\leq_{abp}$ reductions, where the p-family
$\{D_{k,d}\}_{d\geq0}$, denoted $D_k$, is over set of $2k$ distinct
variables $\{(_i,)_i | 1 \leq i \leq k\}$ where $(_i$ and $)_i$ are
matching parenthesis pairs.  The polynomial $D_{k,d}$ consists of the
sum of all monomials $m$ which are well formed parenthesis strings of
degree $d$ over variables in $X_k$.

\begin{displaymath}
 D_{k,d} = \sum_{m \in W_{k,d}}m 
\end{displaymath}

where $W_{k,d}$ is set of well formed parenthesis strings of degree
$d$ over $X_k$.  The theorem we prove in this section is the
following.

\begin{theorem}
\label{thm:vpcomplete}
The Dyck polynomial $D_2=\{D_{2,d}\}_{d\geq0}$ is $\vpnc$-complete
under $\leq_{abp}$-reductions and hence $D_k=\{D_{k,d}\}_{d\geq0}$
 for $k\geq 2$ is $\vpnc$-complete under $\leq_{abp}$-reductions.
\end{theorem}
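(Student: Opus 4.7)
My plan is to establish $D_2\in\vpnc$ and $\vpnc$-hardness of $D_2$ under $\leq_{abp}$ separately; the hardness splits into (i) every $f\in\vpnc$ reduces to $D_k$ for $k$ polynomial in the circuit size of $f$, and (ii) $D_k\leq_{abp}D_2$ for every fixed $k$. Membership follows from the standard Dyck recurrence
\[
D_{2,d}=\sum_{j+\ell=d-1}\bigl((_1 D_{2,j})_1+(_2 D_{2,j})_2\bigr)\cdot D_{2,\ell},
\]
which yields an $O(d^2)$-size circuit by dynamic programming on degree. Part (ii) is a routine encoding of $k$ bracket types into two via block-matrix substitution: one replaces each variable $(_i,)_i$ of $D_k$ by matrices whose entries are monomials over $\{(_1,(_2,)_1,)_2\}$ arranged so that only the intended decoding pattern survives after extracting the $(1,q)$-entry.

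The heart of the argument is (i), which I regard as an arithmetized form of the Chomsky--Sch\"utzenberger theorem. Given a noncommutative circuit $C$ of size $s$ computing $f$, after the standard noncommutative homogenization I may assume $C$ is homogeneous of degree $d$ with size $\poly(s,d)$. I would assign one bracket pair $(_g,)_g$ to each gate $g$ of $C$, giving $k=O(s)$. To each parse tree $T$ of $C$---obtained by choosing one child at every $+$-gate and both children at every $\times$-gate---I associate a Dyck word $\hat T$ over the $k$ brackets recursively: $\hat T_g=(_g)_g$ for an input gate $g$; $\hat T_g=(_g\,\hat T_{g_i}\,)_g$ for a $+$-gate $g=g_1+g_2$ with chosen child $g_i$; and $\hat T_g=(_g\,\hat T_{g_1}\,\hat T_{g_2}\,)_g$ for a $\times$-gate $g=g_1\cdot g_2$. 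The set of all such encodings is exactly $D_{k,D}\cap L(A)$ for a DFA $A$ with $O(s)$ states, where $A$ checks at each open bracket $(_g$ that the subsequent sub-Dyck content has the syntactic shape dictated by $g$'s children in $C$---a condition depending only on the gate currently being expanded and therefore regular.

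I would then invoke the substitution-automaton version of the Hadamard product (the second remark after the Hadamard theorem) to finish. For each transition $(i,j)$ of $A$ I define substitutions $\psi_{ij}$ on the bracket symbols: $\psi_{ij}$ outputs the appropriate field coefficient or input variable of $C$ at exactly those transitions that encode the $(_g,)_g$ of a leaf gate $g$, and $1$ otherwise. Assembling the $\psi_{ij}$ into $|A|\times|A|$ matrices $M'_{(_g)},M'_{)_g}$, the substitution framework yields
\[
\bigl[D_{k,D}\bigl(M'_{(_1)},M'_{)_1},\ldots,M'_{(_k)},M'_{)_k}\bigr)\bigr]_{s_0,t_0}=\sum_{w\in D_{k,D}\cap L(A)}\psi(w)=f_n,
\]
the last equality holding because accepting Dyck encodings biject with parse trees of $C$ and $\psi$ strips the brackets while restoring the product of leaf variables with correct field coefficients; this is an $\leq_{abp}$-reduction $f\leq_{abp}D_k$. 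The main obstacle will be engineering the DFA $A$ and the substitution $\psi$ so that accepting Dyck encodings correspond bijectively to parse trees of $C$ and every monomial of $f_n$ is produced with exactly the right coefficient (the sum, over parse trees yielding that monomial, of the product of field constants on its leaves); the one-bracket-per-gate design is what keeps the bookkeeping clean, since the Dyck matching automatically enforces the recursive pairing while $A$ only has to verify a gate-local syntactic condition at each opening bracket, bounding $|A|$ by $\poly(s)$.
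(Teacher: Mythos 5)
Your overall strategy is the paper's: an arithmetized Chomsky--Sch\"utzenberger argument in which brackets tag the multiplicative structure of the circuit, a polynomial-size substitution automaton (the Hadamard-product remark) filters the Dyck monomials down to encodings of parse trees and decodes them, and $D_k\leq_{abp}D_2$ is handled by re-encoding bracket types. The cosmetic difference is that you attach a bracket pair to \emph{every} gate of the parse tree directly, whereas the paper first transforms the circuit itself (each product gate $f=gh$ becomes $(_f\,g\,)_f\,h$, constants and variables become short bracketed blocks) so that the circuit $C''_n$ already computes a polynomial over brackets; the paper's placement of the \emph{parent-indexed} closing bracket $)_f$ between the two children is what lets the automaton know, upon finishing the left subtree, which right child to move to. In your encoding the symbol closing the left subtree is $)_{g_1}$, indexed by the child, and $g_1$ may be the left child of several gates; you will need the ``what may precede $)_g$'' conditions as well as the ``what may follow $(_g$'' conditions, and an inductive argument that these two-factor constraints together with Dyck matching pin down the parse tree. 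This is doable (it is the standard CS argument) but it is exactly the bookkeeping you defer, and it is not quite as local as ``a condition depending only on the gate currently being expanded.''

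The concrete gap is degree uniformity. $D_{k,D}$ is homogeneous: it sums only the balanced strings of length exactly $D$. But your encodings $\hat T$ have length $2|T|$, and distinct parse trees of a homogeneous degree-$d$ circuit have different node counts (chains of $+$-gates of different lengths, different numbers of constant leaves), so the encodings do \emph{not} all live in a single $D_{k,D}$. As written, $\sum_{w\in D_{k,D}\cap L(A)}\psi(w)$ picks up only those parse trees whose encoding happens to have length $D$ and therefore computes a proper ``slice'' of $f_n$, not $f_n$. This is precisely why the paper introduces the extra parenthesis types $\{_j,\}_j$ and maps a degree-$2j$ monomial of $C''_n$ to $\{_1\}_1\cdots\{_{r-j}\}_{r-j}\{_0\}_0\,m$, padding every encoding up to one common degree $2r+2$ before intersecting with the (single) Dyck polynomial; the automaton consumes the padding prefix, substituting $1$, before it starts simulating the circuit. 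You need this padding step (or a normalization making all parse trees the same size, which itself requires proof); without it the displayed identity at the end of your argument is false.
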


\begin{proof}
 Let $\{C_n\}_{n \geq 0}$ be a polynomial sized polynomial degree circuit family computing polynomials (by abuse of
   notation, also denoted by) $C_n$ in $\F \angle{x_1,\dots,x_n}$.
   Let $s(n)$ and $d(n)$ be polynomials bounding the size and degree
   of $C_n$, respectively. For each $n$ we will construct a collection
   of $2t(n)$ many matrices $M_1,M'_1,\ldots,M_{t(n)},M'_{t(n)}$ whose
   entries are either field elements or monomials in variables
   $\{x_1,\dots,x_n\}$ for a suitably large polynomial bound $t(n)$.
     These matrices have the property that polynomial $D_{t(n),q(n)}$,
     in which we substitute $M_i$ for $(_i$ and $M_i'$ for $)_i$,
     evaluates to a matrix
     $M=D_{t,q}(M_1,M'_1,\ldots,M_{t(n)},M'_{t(n)})$ whose top right
     corner entry is precisely the polynomial $C_n$.

     The idea underlying this construction is from the proof of the
     Chomsky-Sch\"utzenberger theorem (ours is an arithmetic version
     of it) : the matrices $M_1,M'_1,\ldots,M_t,M'_t$ actually
     correspond to the transitions of a deterministic finite state
     substitution automaton which will transform monomials of
     $D_{t(n),q(n)}$ into monomials of $C_n$ so that $M$'s top right
     entry (corresponding to the accept state) contains the polynomial
     $C_n$. We now give a structured description of the reduction.

\begin{enumerate}
 \item Firstly, we do not directly work with the circuit $C_n$ because
   we need to introduce a parsing structure to the monomials of
   $C_n$. We also need to make the circuit constant-free by
   introducing new variables (we will substitute back the constants
   for the new variables in the matrices). To this end, we will carry
   out the following modifications to the circuit $C_n$:

\begin{enumerate}
 \item For each product gate $f=gh$ in the circuit, we convert it to
   the product gate computing $f=(_fg)_fh$, where $(_f$ and $)_f$ are
   new variables.
\item We replace each input constant $a$ of the circuit $C_n$ by a
  degree-$3$ monomial $(_az_a)_a$, where $(_a, )_a, z_a$ are new
  variables.
\end{enumerate}

Let $C'_n$ denote the resulting arithmetic circuit after the above
transformations applied to the gates. The new circuit $C'_n$ computes
a polynomial over $\F \angle{X'}$ where

\begin{eqnarray*}
X' &= & X \cup \{(_g,)_g\mid g \text{ is a }\times\text{ gate in
  }C_n\}\\ & \cup & \{(_a,)_a\mid a \text{ is a constant in }C_n\}
    \\ & \cup & \{z_a \mid a \text{ is a constant appearing in }
      C_n\}.
\end{eqnarray*}

We make a further substitution: we replace every variable $y\in X$ by the degree-$2$ monomial $[_y]_y$ and every variable $z_a$ for constants $a$ appearing in $C_n$ by $[_{z_a}]_{z_a}$ to obtain the arithmetic circuit $C''_n$.

With these substitutions it is clear, by abuse of notation, that
$(C''_n)$ is a p-family. Furthermore, by construction $C''_n$ is a
polynomial whose monomials are certain properly balanced parenthesis
strings over the above parentheses set. It is not homogeneous, but
clearly its degree bounded by a polynomial in $(s(n)+d(n))$.
Furthermore, $C_n \leq_{abp} C''_n$ because we can recover $C_n$ by
substituting 1 for the parenthesis and $y$ for the term $[_y]_y$ and
the scalar $a$ for $[_{z_a}]_{z_a}$.

\item The next step is the crucial part of the proof. We describe the
  reduction from $C''_n$ to $D_{t(n)}$ for suitably chosen
  $t(n)$. Indeed, $t(n)$ is already the number of parentheses type
  used by $C''_n$, along with some additional parenthesis. Let the
  degree of polynomial $C''_n$ be $2r$.  Thus, monomials of $C''_n$
  are of even degree bounded by $2r$. We introduce $r+1$ \emph{new}
  parenthesis types $\{_j,\}_j$, $0 \leq j\leq r$ (to be used as
  prefix padding in order to get homogeneity) and consider the
  polynomial $D_{t(n),q(n)}$ where $q(n)=2r+2$ and $t(n)$ is $(r+1)$
  plus the number of parenthesis types occurring in $C''_n$.

The reduction will map all degree $2j$ monomials in $C''_n$ to
monomials in $D_{t,q}$ of the form
$m'=\{_1\}_1\{_2\}_2\ldots\{_{r-j}\}_{r-j}\{_0\}_0m$ where $m$ is a
  degree $2j$ monomial over the other parentheses types. Now $m'$ is
  of degree $2r+2$ for all choices of $j$ and it is clear that
  monomials which were distinct before the reduction remains distinct
  after the reduction.

Now the matrices of the automaton have to effect substitutions in
order to convert these $m'$ into a monomial of $C''_n$ of degree $2j$.
The strings accepted by this automaton is of the form $uv$, where
$u=\{_1\}_1\{_{2}\}_{2}\ldots\{_{i-1}\}_{i-1}\{_0\}_0$, $0\leq i\leq r+1$ and $v$
is a well-balanced string over remaining parentheses type. This
automaton is essentially the one defined in the proof of the
Chomsky-Sch\"utzenberger theorem.  We outline its description. The
automaton runs only on monomials of $D_{t,q}$ and hence can be seen as
a layered DAG with exactly $q(n)$ layers.

\begin{enumerate}
 \item 

The start state of the automaton is $(\hat{s},0)$.
The automaton first looks for prefix
$\{_1\}_1\{_2\}_2\ldots\{_{r-j}\}_{r-j}\{_0\}_0$. As it reads these
variables, one by one, it steps through states $(\hat{s},i)$,
substitutes 1 for each of them, and reaches state $(s,2(r-j+1))$ when
it reads $\}_0$, where $s$ is the name of the output gate of circuit
$C''_n$. If any of $\{_l,\}_l$, $l \in [r] \cup \{0\}$ occur later
they are substituted by 0 (to kill that monomial).

\item The automaton will substitute $[_x]_x$ by $x$ (if $[_x$ is not
  immediately followed by $]_x$ then it substitutes $0$ for $[_x$).
  Similarly, the automaton substitutes $[_a]_a$ by $a$ (if $[_a$ is
    not followed by $]_a$ then it substitutes $0$ for it).

\item Now, we describe the crucial transitions of the automaton
  continuing from state $(s,2(r-j+1))$, where $s$ is the output gate
  of circuit $C''_n$. The transitions are defined using the structure
  of the circuit $C''_n$. At this point the automaton is looking for a
  degree $2j$ monomial. Let $D <2r+2$.  We have the following
  transitions:

\begin{enumerate}
\item $(\hat{s},2j) \rightarrow \{_{j+1}\}_{j+1}(\hat{s},2(j+1))$, where $0\leq j
  < r$
\item $(\hat{s},2(r-j)) \rightarrow \{_0\}_0(s,2(r-j+1))$, where $0\leq j
  \leq r$ and $s$ is the output gate in the circuit $C''_n$.
 \item $(g,D)\rightarrow (_g(g_l,D+1)$, where $g$ is an internal
   product gate in circuit $C''_n$ and $g_l$ is its left child.

\item Include the transition $(g,D)\rightarrow (_h(h_l,D+1)$, if $g$
  is an internal $+$ gate in circuit $C''_n$, $h$ is an internal
  product gate such that there is a directed path of $+$ gates from
  $h$ to $g$. As before, $h_l$ denotes the left child of $h$.

\item For each input variable, say $z$, in the circuit $C''_n$ and for
  each product gate $g$ in the circuit $C''_n$, the automaton includes
  the transition $(h,D) \rightarrow [_z]_z)_g(g_r,D+3)$, if
  $D+3<2r+2$, where $g_r$ is the right child of the internal product
  gate $g$, and $h$ stands for any internal gate in $C''_n$.

If $D+3=2r+2$ then the automaton instead includes the transition
  $(h,D) \rightarrow [_z]_z)_g(t,2r+2)$, where $(t,2r+2)$ is the
  unique accepting state of the automaton.
\end{enumerate}

Note that the interpretation of the transition
\[
(h,D) \rightarrow [_z]_z)_g(g_r,D+3)
\]
is as follows: The automaton reads the degree-$3$ monomial $[_z]_z)_g$
and goes from state $(h,D)$ to $(g_r,D+3)$.
\end{enumerate}
\end{enumerate}

We now describe the matrices that we substitute for each
parenthesis. Let $M_p$ be the matrix we substitute for parenthesis $p$
its whose rows and columns are labelled by nodes of the ABP.

We define the matrix $M_p$ for parenthesis $p$ as follows:

\begin{displaymath}
m_{u,v}=M_p[u,v]= \left\{ \begin{array}{ll} 1 & \textrm{if $p \in U$
    and $\exists e=(u,v) \in$ E(A) and label of $e$ is $p$ }\\ z &
  \textrm{if $p =]_z$ and $\exists e=(u,v) \in$ E(A) and label of $e$
  is $p$ }
\end{array} \right.
\end{displaymath}
where $z$ denotes a variable in the circuit $C''_n$ and 
 E(A) is the edge set of the automaton A and 
\begin{eqnarray*}
U &=&\{[_z \mid z \text { is a variable in $C''_n$} \} \\ & \bigcup & \{(_i,)_i \mid i \in [s']\}\\ & \bigcup & \{\{_j,\}_j \mid j\in [r] \cup \{0\}\} 
\end{eqnarray*}
where $s'$ denotes the number of product gates in the circuit $C_n$.


It is clear that after substituting these matrices for the variables
in the polynomial $D_k$, where $k$ denotes the number of parenthesis
types in $C''_n$, the top right corner entry of the resulting matrix
is polynomial computed by the given circuit $C$. It is easy to see
that $D_2 \leq_{abp} D_k$ for all $k > 2$. Furthermore, we can show
for any $k>2$ that $D_k \leq_{abp}D_2$, by suitably encoding different
types of brackets into two types. Thus, it follows that the p-family
$D_k$, for any $k\geq 2$, is $\vpnc$-complete under
$\leq_{abp}$-reductions.
\end{proof}

\begin{remark}
We note that $D_1 \abp \pal \abp D_2$ and $D_2 \not\abp \pal \not\abp
D_1$. To see this the first one, observe that we have a DFA (of
growing size) for $D_1$. Hence $D_1$ is in $\vbpnc$ which trivially
implies that $D_1$ is $\abp$-reducible to $\pal$. As $\pal$ is not in
$\vbpnc$ \cite{N91}, it follows that $\pal \not\abp D_1$. We show in
theorem \ref{pal-to-d2} that $D_2$ is not $\abp$-reducible to $\pal$.
\end{remark}

\section{Palindrome Polynomials are $\vsk$-complete}

\begin{theorem}\label{pal-vsk}
The p-family $\PAL$ is $\vsk$-complete for $\leq_{abp}$ reductions.
\end{theorem}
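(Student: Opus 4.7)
The theorem has two parts: $\PAL \in \vsk$ and $\vsk$-hardness under $\leq_{abp}$-reductions.

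For membership, I would give a linear-size skew circuit for $\PAL_n$ based on the recurrence
\[
\PAL_n \;=\; x_0 \cdot \PAL_{n-1} \cdot x_0 \;+\; x_1 \cdot \PAL_{n-1} \cdot x_1, \qquad \PAL_0 = 1,
\]
in which every multiplication involves $x_0$ or $x_1$ as one operand, so each $\times$ gate is skew.

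For the hardness direction, let $\{C_n\}$ be a family of skew circuits of polynomial size and polynomial degree computing a p-family $f=(f_n)$. The plan is to construct an $\leq_{abp}$-reduction $f \leq_{abp} \PAL$ by mirroring the proof of Theorem~\ref{thm:vpcomplete} (the Dyck polynomial case) but replacing the general balanced-bracket structure by a \emph{palindromic} one. The guiding analogy is that a skew circuit is the arithmetic analogue of an acyclic \emph{linear} context-free grammar: every $\times$ gate has the form $v \cdot h$ or $h \cdot v$ with $v$ a single variable or constant, exactly paralleling the linear productions $A \to vB$ and $A \to Bv$. Accordingly, $\PAL$ should play the role for $\vsk$ that the Dyck polynomial plays for $\vpnc$.

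Concretely, I would take four steps. First, mimic the constant-removal and variable-wrapping steps of Theorem~\ref{thm:vpcomplete} to obtain a constant-free circuit with explicit variable markers. Second, for each $\times$ gate $g$, attach a mirror marker: replace $g = v \cdot h$ by $g = v \cdot h \cdot \overline{v}_g$ and $g = h \cdot v$ by $g = \overline{v}_g \cdot h \cdot v$, where $\overline{v}_g$ is a fresh variable specific to $g$. The outer multiplications remain skew, and every monomial of the resulting circuit $C''_n$ now has palindromic shape in the extended alphabet: at each pair of mirror positions one sees a matched pair $(v, \overline{v}_g)$ arising from some $\times$ gate $g$, with a single base leaf in the middle. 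Third, pad all monomials to a common length $2N$ by inserting balanced pairs $\{_j \ldots \}_j$ symmetrically at both ends (the palindromic analogue of the prefix padding in the Dyck proof), and encode each extended-alphabet symbol as a fixed-length binary string over $\{x_0, x_1\}$, choosing $\overline{v}_g$'s encoding to be the reverse of $v$'s and the encoding of each base leaf to be a palindrome; every monomial of $C''_n$ then corresponds to a length-$2N$ palindrome in $\{x_0, x_1\}^{2N}$ for a polynomial $N$. Finally, build a deterministic finite substitution automaton $A$ whose states record the current gate of $C''_n$, the position within the current symbol's encoding, and the position within the palindrome. The automaton reads such palindromes, verifies that the gate-level bracket skeleton corresponds to a valid derivation in $C''_n$, and substitutes each transition back to the appropriate original variable or constant (and to $0$ on inconsistent structural transitions, to $1$ on marker transitions). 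Applying the substitution-automaton construction of Section~2.3 then produces matrices $M_{x_0}, M_{x_1}$ such that the $(s,t)$-entry of $\PAL_N(M_{x_0}, M_{x_1})$ equals $f_n$, giving the required reduction.

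The main obstacle will be the bookkeeping in the middle two steps: distinct parses of $C_n$ must produce distinct encoded palindromes so that the coefficients of $f_n$ are preserved exactly, and the DFA must remain polynomial in size while validating structural consistency using only one-character-at-a-time observations. Because $\PAL_N$ itself already enforces symbol-level palindromic symmetry automatically, $A$ need only validate the gate-level bracket skeleton and perform local substitutions; the currently active gate together with the local encoding-position can be tracked in polynomially many states, which suffices.
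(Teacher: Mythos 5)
Your overall architecture is exactly the paper's: wrap each skew multiplication symmetrically with a matched marker so that every monomial of the modified circuit becomes a palindrome over an extended alphabet, pad symmetrically to a uniform degree, and then use a deterministic substitution automaton (the Hadamard-product machinery of Section 2.3) to map accepted palindromes back to monomials of $C_n$, relying on $\PAL$ itself to enforce the mirror symmetry the one-pass automaton cannot check. The membership direction via the recurrence $\PAL_n=x_0\PAL_{n-1}x_0+x_1\PAL_{n-1}x_1$ is correct.

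There is, however, a concrete flaw in the combination of your steps 2 and 3. You keep the \emph{original} variable $v$ on one side of the wrapped gate and place the gate-specific marker $\overline{v}_g$ only on the mirror side, and you then force $\overline{v}_g$'s binary encoding to be the reverse of $v$'s encoding (as you must, for the encoded string to be a palindrome). The net effect is that the resulting binary palindrome records only the sequence of leaf labels and not which product gates were traversed: two distinct parse paths of the skew circuit that multiply in the same sequence of variables and constants produce the \emph{same} palindrome. Since each palindrome occurs with coefficient $1$ in $\PAL_N$ and a deterministic automaton has at most one accepting run per string, only one of those parse paths can contribute, so any coefficient of $f_n$ that arises as a sum over several parse paths (or over paths carrying different scalar constants) comes out wrong; tracking ``the currently active gate'' in the DFA state cannot repair this, because the input string does not determine the gate and a DFA cannot branch on information it has not yet read. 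The paper avoids this by making \emph{both} occurrences fresh and edge-indexed ($x_{(e,h,L)}$ on the left and $x_{(e,h,R)}$ on the right), so that distinct parse paths yield distinct monomials of the modified circuit $C''_n$ and hence distinct palindromes, with the original variable or constant reinstated only at the matrix-substitution stage. With that change (and the binary encoding then applied to the gate-indexed symbols, not to the bare variables) your argument goes through and coincides with the paper's proof.
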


\begin{proof}
The proof is along the same lines as that of Theorem
\ref{thm:vpcomplete}. We will show for any p-family in $\vsk$ is
$\leq_{abp}$-reducible to $\PAL$.

Let $\{C_n\}_{n \geq 0}$ be a polynomial sized skew circuit family of
  polynomial degree $d(n)$ computing polynomials (by abuse of
  notation, also denoted by) $C_n$ in $\F \angle{x_1,\dots,x_n}$.  Let
  $s(n)$ and $d(n)$ be polynomials bounding the size and degree of
  $C_n$, respectively. We will construct a collection of $2t(n)$
  matrices $M_1,M'_1,\ldots,M_{t(n)},M'_{t(n)}$ whose entries are
  either field elements or monomials in variables
  $\{x_{1,L},x_{1,R},\dots,x_{n,L},x_{n,R}\}$ for a suitably large
  polynomial bound $t(n)$. These matrices have the property that
  polynomial $\PAL_{t(n)}$, in which we substitute $M_i$ for $x_{i,L}$
  and $M_i'$ for $x_{i,R}$, evaluates to a matrix
  $M=\PAL_t(M_1,M'_1,\ldots,M_{t(n)},M'_{t(n)})$ whose top right
  corner entry is precisely the polynomial $C_n$.

As in the proof of Theorem \ref{thm:vpcomplete}, the basic idea is
from the Chomsky-Sch\"utzenberger theorem: the matrices
$M_1,M'_1,\ldots,M_t,M'_t$ will correspond to the transitions of a
deterministic finite state (substitution) automaton which will
transform monomials of $\PAL_{t(n)}$ into monomials of $C_n$ so that
$M$'s top right entry (corresponding to the accept state) contains the
polynomial $C_n$. We now give a structured description of the
reduction.

W.l.o.g we can assume the skew circuit $C_n$ is homogeneous. At the
input level, we replace variables $x$ by $x_Lx_R$. 

\begin{enumerate}
\item Firstly, we do not directly work with the circuit $C_n$ because
  we need to introduce a parsing structure to the monomials of
  $C_n$. We also need to make the circuit constant-free by introducing
  new variables (we will substitute back the constants for the new
  variables in the matrices). To this end, we will carry out the
  following transformations:
 
\begin{enumerate}
 \item For each left-skew product gate $f=xh$ in the circuit $C_n$
   (similarly for the right-skew gate $f=hx$), where $x$ is an input
   variable and $h$ a gate in the circuit, let $e=(h,f)$ denote the
   directed edge in the circuit $C_n$ (seen as a directed acyclic
   graph). We convert it to the gates

\begin{eqnarray*}
f'&=&hx_{(e,h,R)} \\
f^{''}&=& x_{(e,h,L)}f',
\end{eqnarray*}

where $x_{(e,h,L)},x_{(e,h,R)}$ are fresh variables.

\item For each product gate $f=ah$ in the circuit $C_n$ for $a \in \F$
  and $e=(h,f)$ is the edge in the circuit we convert it to gates

\begin{eqnarray*}
f'&=&ha_{(e,h,R)} \\
f^{''}&=&a_{(e,h,L)}f'
\end{eqnarray*}

where $a_{(e,h,L)},a_{(e,h,R)}$ are fresh variables.
\end{enumerate}

Let $C'_n$ denote the resulting circuit. It computes a polynomial over
$\F \angle{X'}$ where the variable set $X'$ is:

\begin{eqnarray*}
 X'& = &\{x_{(e,h,R)},x_{(e,h,R)} | x \in X, e \in E\} \\ & \cup &
   \{a_{(e,h,L)},a_{(e,h,R)} | a \text{ is a constant appearing in the
       edge $e \in E$ }\}.
\end{eqnarray*}

Here $E$ is set of all edges $e$ in the given circuit $C_n$.

Clearly, $(C'_n)$ is a p-family, and $C'_n$ is a polynomial whose
nonzero monomials $m$ are palindrome monomials in the following sense:
\textrm{ in a monomial $m$ of degree $2d$, for all $i \in [d]$ and for
  any edge $e$ and gate $g$ at position $i$ we have variable
  $x_{(e,g,L)}$ and at position $2d-i+1$ we have variable
  $x_{(e,g,R)}$}.

We also have the reduction $(C_n) \leq_{abp} (C'_n)$ because we can
recover $C_n$ from $C'_n$ by substituting $x$ for either $x_{e,h,L}$
or $x_{e,h,R}$ (and 1 for the other variable) and the scalar $a$ for
either $a_{e,h,L}$ or $a_{e,h,R}$ (and 1 for the other
variable). Notice that the number of variables in $C'_n$ and the degree of
$C'_n$ are polynomially bounded by a suitable function of $n$ (but we
are not specifying it for ease of notation).

\item Let the degree of polynomial $C'_n$ be $2r$. Thus monomials of
  $C'_n$ are of even degree bounded by $2r$. Like in Theorem
  \ref{thm:vpcomplete}, we will introduce $r+1$ \emph{new} variable
  pairs $y_{j,L},y_{j,R}$, $0 \leq j\leq r$ (to be used as prefix and
  suffix padding in order to get homogeneity).  The reduction will map
  a degree $2j$ monomial $m$ in $C'_n$ to monomial $m'$ in
  $\PAL_{r+1}$ of the following form:
\begin{align*}
 m'&=(y_{1,L}y_{2,L}\ldots y_{r-j,L}y_{0,L})m(y_{0,R}y_{r-j,R}\ldots
 y_{2,R}y_{1,R})
\end{align*}
Now, $m'$ is of degree $2r+2$ for all choices of $j$ and it is clear
that monomials which were distinct before the reduction remains
distinct after the reduction. Let $C''_n$ denote this resulting new
circuit.

\item Like in Theorem \ref{thm:vpcomplete}, we construct automaton A
  from this modified circuit $C''_n$. We construct automaton which
  (apart from accepting many non-palindrome monomials) accepts only
  palindrome monomials $ww^{R}$ such that the first half $w$ is
  ``compatible'' with the circuit structure of $C''_n$ (and
  monomials whose first half is non-compatible are not accepted by the
  automaton A). Now the matrices of the automaton have only to effect
  substitutions in a careful manner to convert these $m'$ into a
  monomial of $C''_n$ of degree $2j$. The automaton is a layered DAG
  with exactly $2r+2$ layers.

\begin{enumerate}
\item The start state of the automaton is $(\hat{s},0)$. The automaton
  first looks for a prefix $(y_{1,L}y_{2,L}\ldots
  y_{r-j,L}y_{0,L})$. As it reads these variables, one by one, it
  steps through states $(\hat{s},i)$, substitutes 1 for each of them,
  and reaches state $(s,(r-j+1))$ when it reads $y_{0,L}$, where $s$
  is the name of the output gate of circuit $C''_n$. If any of
  $y_{l,L}$, $l \in [r] \cup \{0\}$ occur later they are substituted
  by 0 (to kill that monomial).

  \item Now we describe the transitions of the automaton continuing
    from state $(s,(r-j+1))$. Here the automaton has to use the
    structure of the circuit $C''_n$ to define further
    transitions. At this point the automaton is looking for a degree
    $2j$ monomial. Let $D <2r+2$.  We have the following transitions:

\begin{enumerate}

\item $(\hat{s},j) \rightarrow y_{(j+1,L)}(\hat{s},j+1))$, where
  $0\leq j < r$ (as already described above).

\item $(\hat{s},j) \rightarrow y_{(0,L)}(s,j+1)$, where $0\leq j \leq
  r$ and $s$ is the output gate in the circuit $C''_n$.

\item In state $(s,j+1)$ if the automaton reads variable $x_{e,g,L}$
  (or variable $a_{e,g,L}$) then it moves to state $(g,j+2)$ if the
  gate $g$ is a left-skew multiplication occurring in the circuit
  $C''_n$, and the directed path from $g$ to $s$ in the circuit has
  only $+$ gates or right-skew multiplication gates in it. Formally,
  the transition made is:

\begin{align*}
(s,j+1)&\rightarrow x_{(e,g,L)}(g,j+2).
\end{align*}
We have a similar transition when the automaton reads variable
$a_{e,g,L}$.

\item In general, when the automaton is in state $(g,D)$ for a
  left-skew multiplication gate $g$ in the circuit and it reads
  variable $x_{e,h,L}$ (or $a_{e,h,L}$) then it moves to state
  $(h,D+1)$ if the gate $h$ is left-skew occurring in the circuit, and
  the directed path from $h$ to $g$ has only $+$ gates or right-skew
  multiplication gates in it. Formally, the transition made is:

\begin{align*}
(g,D)&\rightarrow x_{(e,h,L)}(h,D+1).
\end{align*}

We have a similar transition for variable $a_{e,h,L}$.

\item Proceeding thus, when the automaton reaches a state $(g,r+1)$
for some left-skew multiplication gate it makes only transitions
of the form: 

\begin{align*}
(g,D)&\rightarrow x_{(e,h,R)}(t,D+1),
\end{align*}

for all variables $x_{e,h,R}$ and for all $D<2r+2$. The state
$(t,2r+2)$ is the unique accepting state of the automaton.
\end{enumerate}

Transitions (i-iv) reads the first half of any input monomial which
are compatible with the structure of the circuit $C''_n$.   By
construction of the transitions in (i-iv) the following claim holds.

\begin{claim}
 The \textrm{DFA} defined above accepts a palindrome string $uv \in
 (X')^{2r+2}$ iff the palindrome $uv$ is a nonzero monomial in the
 polynomial computed by $C''_n$.
\end{claim}
\end{enumerate}

\item We can convert this automaton into a homogeneous ABP $A$
  computing the homogeneous polynomial of degree $2r+2$. We now
  describe matrices we substitute for each variable. Let $M_z$ be the
  matrix we substitute for a variable $z$ where rows and columns of
  $M_z$ are labelled by nodes of the ABP.

We set entries of the matrix $M_z$ for a variable $z$ as follows:
\begin{itemize}
 \item If the variable $z =a_{(e,h,L)}$ where $a$ is a scalar
   appearing on the edge $e$ in the circuit $C_n$, then we set
   $m_{u,v}=M_z[u,v]=a$ iff the automaton reaches the state $v$ from the state $u$ when it reads $z$.
 \item Else, if the variable $z =a_{(e,h,R)}$ where $a$ is a scalar
   appearing on the edge $e$ in the circuit $C_n$, then we set
   $m_{u,v}=M_z[u,v]=1$ iff the automaton reaches the state $v$ from the state $u$ when it reads $z$.

\item Else, if $z = x_{(e,g,L)}$, where $x \in X$, $e$ is an edge in the circuit $C_n$, $\text{ g
  is some gate in $C_n$}$, then
\begin{itemize}

 \item If the actual variable for $z$ occurs as left multiplication on
   the edge $e$, then we set $m_{u,v}=x$ iff the automaton reaches the state $v$ from the state $u$ when it reads $z$.

\item Else, if $m_{u,v}=1$ (i.e., the actual variable for $z$ occurs as right multiplication )
\end{itemize}
 \item Else, if $z = x_{(e,g,R)}$, where $x \in X$, $e$ is an edge in the circuit $C_n$, $\text{ g
   is some gate in $C_n$}$, then

\begin{itemize}

 \item If the actual variable for $z$ occurs as right multiplication
   on the edge $e$, then we set $m_{u,v}=x$ iff the automaton reaches the state $v$ from the state $u$ when it reads $z$.
\item Else, if $m_{u,v}=1$ (i.e., the actual variable for $z$ occurs as left multiplication )
\end{itemize}

 \item Else, if the variable $z = y_{(j,L)}$ or $z = y_{(j,R)}$, $0\leq j \leq r$ then we set
   $m_{u,v}=M_z[u,v]=1$ iff the automaton reaches the state $v$ from the state $u$ when it reads $z$.
\item Else, we set $m_{u,v}=0$.

\end{itemize}
\end{enumerate}

It is clear that on substituting these matrices for the variables in $\PAL_{r+1}$ , we get the polynomial computed by the given circuit $C_n$ in the
top right corner entry of the resulting matrix. This completes the proof.
\end{proof}

\section{A Ladner's Theorem analogue for $\vnpnc$}

In this section we explore the structure of $\vnpnc$ assuming the
sum-of-squares conjecture. The sum-of-squares conjecture implies that
the p-family $ID$ (which is in $\vnpnc$) is not in $\vpnc$ \cite{HWY10a}. In
particular, the conjecture implies that $\vpnc\neq \vnpnc$. A natural
question that arises is whether this conjecture implies that there are
p-families in $\vnpnc\setminus \vpnc$ that are not $\vnpnc$-complete.

This is similar in spirit to the well-known Ladner's Theorem that
shows, assuming $\P\ne \NP$, that there is an infinite hierarchy of
polynomial degrees between $\P$ and $\NP$-complete. For commutative
Valiant's classes, the existence of $\VNP$-intermediate p-families is
investigated by B\"urgisser \cite{Burg99}. The results there require
an additional assumption about counting classes in the boolean
setting.

\begin{conjecture}[$SOS_k$ Conjecture]
Consider the question of expressing the biquadratic polynomial
\[
SOS_k(x_1,\ldots,x_k,y_1,\ldots,x_k)=(\sum_{i\in
  [k]}x_i^2)(\sum_{i\in [k]}y_i^2)
\]
 as a sum of squares $(\sum_{i\in [s]}f_i^2)$, where $f_i$ are all
 homogeneous bilinear polynomials with the minimum $s$.

The $SOS_k$ conjecture states that over the field of complex numbers
$\C$, for all $k$ we have the lower bound $s=\Omega(k^{1+\epsilon})$.
\end{conjecture}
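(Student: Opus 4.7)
The $SOS_k$ conjecture is a well-known open problem in algebraic complexity; I do not expect to settle it here, and the plan below describes only the shape of a natural attack and the point at which it stalls.

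The first step is to recast the statement in tensor-algebraic form. Writing each bilinear form as $f_i = \sum_{j,k} a_{i,jk} x_j y_k$ and collecting the coefficient matrices $A_i = (a_{i,jk}) \in \C^{k \times k}$, the identity $SOS_k = \sum_{i=1}^s f_i^2$ becomes, after symmetrization, an equation of the form
\[
\sum_{i=1}^s A_i \otimes A_i = T_k,
\]
where $T_k$ is the symmetric tensor corresponding to the target biquadratic form. Under this dictionary the conjecture becomes a superlinear lower bound, over $\C$, on the symmetric tensor rank of the explicit tensor $T_k$ inside the symmetric square of $\C^{k\times k}$.

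With this reformulation in hand I would then deploy the usual lower bound toolkit for tensor and bilinear complexity: restriction/substitution arguments, partial-derivative and shifted-partial measures, flattening-based border-rank bounds, and algebraic-geometric dimension counts. Concretely, one looks for a restriction of the $x_i$ and $y_j$ to an affine subspace on which $T_k$ remains ``large'' in some analytic or dimensional sense, yet any representation $\sum_{i=1}^s A_i \otimes A_i$ becomes degenerate unless $s = \Omega(k^{1+\epsilon})$. A complementary angle is to exploit the symmetry group of $SOS_k$ (a product of two orthogonal/unitary groups acting on $x$- and $y$-variables separately) to restrict attention to representations in a simpler normal form, and then count parameters.

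The principal obstacle, which is precisely why the conjecture has resisted proof, is that over $\C$ current lower bound methods for symmetric tensor rank of explicit tensors rarely exceed linear in the format size. Over formally real fields one can exploit non-negativity of real squares (via eigenvalue/trace arguments) to obtain a quadratic-type bound, but over $\C$ a square carries no sign information and those positivity arguments collapse. A pragmatic fallback would be to isolate a weaker algebraic statement still sufficient for the Ladner-style consequences for $\vnpnc$ intended in this section; however, the tight equivalence of \cite{HWY10a} between $SOS_k$ and noncommutative circuit lower bounds for $ID$ suggests that circumventing the conjecture itself would require genuinely new ideas, so at the level of this paper the conjecture is to be assumed rather than proved.
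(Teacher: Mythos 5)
You have correctly identified that this is a stated conjecture (due to Hrubes--Wigderson--Yehudayoff) which the paper assumes rather than proves, and your concluding position matches the paper's: the $SOS_k$ conjecture is used only as a hypothesis for the results on $ID$ and the Ladner-style hierarchy inside $\vnpnc$. Your survey of the tensor-rank reformulation and of why complex (as opposed to real) lower bounds stall is accurate but not required, since no proof is expected here.
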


In \cite{HWY10a}, it is shown that the $SOS_k$-conjecture implies that
the p-family $ID=\{ID_d\}_{d\geq 0}$ where $ID_d(x_0,x_1)=\sum_{w \in
  \{x_0,x_1\}^d}ww$ is not in $\vpnc$. In fact, they prove exponential
circuit size lower bounds for $ID_d$ assuming the conjecture. We need the
following definition.

\begin{definition}[$\vnpnc$-intermediate]
 We say that a noncommutative p-family $f=(f_n)_{n\geq 0}$ is
 $\vnpnc$-intermediate if $f \notin \vpnc$ and $f$ is not
 $\vnpnc$-complete w.r.t. $\leq_{iproj}$ reductions.
\end{definition}

In this section, we show the $SOS_k$ conjecture actually yields much
more inside $\vnpnc$. We prove the following results.

\begin{enumerate}
 \item That $ID$ is a $\vnpnc$-intermediate polynomial assuming
   $SOS_k$ conjecture.
\item There are infinitely many p-families $f^{(i)}$, $i=1,2,\ldots$
  in $\vnpnc$ such that for all $i$, $f^{(i)} \leq_{iproj} f^{(i+1)}$ and
  $f^{(i+1)} \nleq_{iproj} f^{(i)}$.
\end{enumerate}

We do not have similar results for the stronger $\leq_{abp}$
reducibility.

The proof of the first result is by using a simple "transfer" theorem
which allows us to transfer a $\vnpnc$-complete p-family w.r.t
$\leq_{iproj}$ reductions to a commutative $\VNP$-complete p-family
w.r.t $\leq_{proj}$ reductions. 

\begin{definition}
Let $f=(f_n)$ be a p-family in $\vnpnc$, where each $f_n$ is a
homogeneous polynomial of degree $d(n)$. We define the
\emph{commutative version} $f^{(c)}=(f^{(c)}_n)$ as follows:
Suppose $f_n \in \F\angle{X_n}$. Let $Y_n=\bigcup_{1\leq i \leq
  d(n)}X_{n,i}$ be a new variable set where $X_{n,i}=\{x_{ji}|\forall
x_j \in X_n\}$ is a copy of the variable set $X_n$ for the $i^{th}$
position. If the polynomial $f_n=\sum \alpha_mm$ where $\alpha_m \in
\F$ and $m \in X_n^{d(n)}$ is a monomial, the polynomial $f^{(c)}_n$
is defined as $f^{(c)}_n=\sum \alpha_mm'$, where if
$m=x_{j_1}x_{j_2}\ldots x_{j_d}$ then $m'=x_{j_1,1}x_{j_2,2}\ldots
x_{j_d,d}$.
\end{definition}

Clearly, $f^{(c)}_n \in \F[X]$ and is a set-multilinear homogeneous polynomial of
 degree $d(n)$.

\begin{lemma}\label{lem:transfer}
For any p-families $f$ and $g$, if $f \leq_{iproj} g$ then $f^{(c)}
\leq_{proj}g^{(c)}$.
\end{lemma}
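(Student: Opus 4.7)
The plan is to directly translate the indexed projection witnessing $f \leq_{iproj} g$ into an ordinary projection witnessing $f^{(c)} \leq_{proj} g^{(c)}$. The crucial observation is that in the commutative version $g^{(c)}_{p(n)}$, each variable $y_{j,i}$ already carries its position $i$ as part of its name. Hence the positional information that an indexed projection exploits is automatically encoded in the variable names of $g^{(c)}$, so a plain (non-indexed) projection on $g^{(c)}$ suffices to mimic it.

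I would unpack the hypothesis as follows: $f \leq_{iproj} g$ supplies a polynomial $p(n)$ and an indexed projection map $\phi : [d'_{p(n)}] \times Y_{p(n)} \to X_n \cup \F$ (write $d = d(n)$ and $d' = d'(p(n))$) such that substituting $\phi(i,y)$ for $y$ at position $i$ in every monomial of $g_{p(n)}$ produces $f_n$. I would then define the ordinary projection $\psi : Y^{(c)}_{p(n)} \to X^{(c)}_n \cup \F$ by
\[
\psi(y_{j,i}) \;=\; \begin{cases} \phi(i, y_j) & \text{if } \phi(i, y_j) \in \F, \\ x_{k,i} & \text{if } \phi(i, y_j) = x_k \in X_n. \end{cases}
\]

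To verify $g^{(c)}_{p(n)}(\psi(\cdot)) = f^{(c)}_n$, I would track a single monomial: each monomial $m = y_{j_1} \cdots y_{j_{d'}}$ of $g_{p(n)}$ with coefficient $\alpha_m$ yields the set-multilinear monomial $\alpha_m \cdot y_{j_1,1} \cdots y_{j_{d'},d'}$ in $g^{(c)}_{p(n)}$, and applying $\psi$ to this produces exactly the commutative version of $\alpha_m \cdot \phi(1, y_{j_1}) \cdots \phi(d', y_{j_{d'}})$, i.e., the commutative version of $\phi$'s image of $m$. Summing over all monomials of $g_{p(n)}$ then gives $g^{(c)}_{p(n)}(\psi(\cdot)) = f^{(c)}_n$, as required.

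The main obstacle I anticipate is a potential mismatch between the position range $[d']$ used for indexing variables of $g^{(c)}_{p(n)}$ and the range $[d]$ used in $X^{(c)}_n$, which arises when $\phi$ assigns scalars to some positions (so the image $f_n$ has strictly smaller degree than $g_{p(n)}$). I would address this by reducing, without loss of generality, to the degree-preserving case $d = d'$: scalar substitutions in $\phi$ can be absorbed into the coefficients of a rewritten $g_{p(n)}$ via a preprocessing step, after which the rule defining $\psi$ lands inside $X^{(c)}_n \cup \F$ and the monomial-tracking argument above goes through unchanged.
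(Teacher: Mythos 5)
Your construction is the same as the paper's: the paper defines the commutative projection by $\phi'_n(x_{ji})=\phi_n(i,x_j)$ and dismisses the verification with ``clearly,'' so your explicit monomial-by-monomial check is if anything more careful than the original. The only soft spot is your last paragraph: you cannot literally ``absorb scalar substitutions into a rewritten $g_{p(n)}$,'' because the reduction target is $g^{(c)}$ as defined from the \emph{original} $g$, and when $\phi$ places scalars at positions that vary with the monomial the image position of a surviving variable is not a function of $(j,i)$ alone, so no single renaming repairs the index shift. That said, the paper's own proof silently assumes the degree-preserving case as well, and that is the only case that arises in its applications, where your map coincides exactly with the paper's.
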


\begin{proof}
 Since $f\leq_{iproj}g$, for every $n$ there is a polynomial $p(n)$
 and an indexed projection $\phi_n:[d_{p(n)}] \times
 X_{p(n)}\rightarrow (Y_{ij})_{1\leq i,j\leq n}$ s.t.
 $f_n(Y_n)=g(\phi_n(X_{p(n)}))$ where $d_{p(n)}$ is the degree of the
 polynomial $g_{p(n)}$. Define $\phi'_n:\bigcup_{i \in [d(n)]}
 X_{p(n),i}\rightarrow Y_n$ as $\phi'_n(x_{ji})=\phi_n(i,x_j)$ for
 $1\leq i,j \leq n$. Clearly, $f^{(c)}$ is reducible to $g^{(c)}$ via
 this projection reduction. This completes the proof.
\end{proof}

The following theorem is a corollary of Lemma \ref{lem:transfer}.

\begin{theorem}[Transfer theorem]
\label{thm:transfer}
 Let $f=(f_n)\in \vnpnc$ be a homogeneous p-family that is
 $\vnpnc$-complete for $\leq_{iproj}$-reductions. Then $f^{(c)} \in
 \VNP$ is $\VNP$-complete for $\leq_{proj}$-reductions.
\end{theorem}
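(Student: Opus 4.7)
The plan is to derive VNP-completeness of $f^{(c)}$ by using Lemma~\ref{lem:transfer} as the transfer mechanism and factoring through the classical commutative permanent. Two things must be established: membership $f^{(c)}\in\VNP$, and $\VNP$-hardness of $f^{(c)}$ under $\leq_{proj}$.

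For membership, I would start from a $\vnpnc$-representation $f_n(x)=\sum_{y\in\{0,1\}^{r(n)}}g_{p(n)}(x,y)$ with $g\in\vpnc$. The operation $(\cdot)^{(c)}$ is defined monomial by monomial, so it commutes with the Boolean sum over $y$; hence it suffices to show $g^{(c)}\in\VP$. I would first homogenize $g$ (allowed with polynomial blowup, since $g$ has polynomial size and polynomial degree), so every wire of the circuit has a well-defined degree. I would then build a commutative circuit whose gates are pairs $(v,i)$ indexed by an original gate $v$ and a starting position $i\in[d(n)]$, where $(v,i)$ computes the commutative set-multilinear polynomial obtained from $v$'s (homogeneous) noncommutative polynomial by giving its $k$-th symbol position index $i+k-1$. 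A noncommutative product gate $v=u\cdot w$ is translated by the rule $(v,i)=(u,i)\cdot(w,i+\deg u)$, sum gates are translated component-wise, and input gates are handled directly. This yields a polynomial-size commutative circuit computing $g^{(c)}$, so $g^{(c)}\in\VP$ and hence $f^{(c)}\in\VNP$.

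For hardness, I would use that the noncommutative permanent $\PER_n=\sum_{\sigma\in S_n}x_{1,\sigma(1)}\cdots x_{n,\sigma(n)}$ lies in $\vnpnc$. Since $f$ is assumed $\vnpnc$-complete under $\leq_{iproj}$, we have $\PER\leq_{iproj}f$, and Lemma~\ref{lem:transfer} then gives $\PER^{(c)}\leq_{proj}f^{(c)}$. Now the classical commutative permanent is recovered from $\PER^{(c)}$ by the projection $x_{i,j,k}\mapsto x_{i,j}$ which identifies the position copies, so commutative $\PER_n\leq_{proj}\PER^{(c)}$. By Valiant's theorem, the commutative permanent is $\VNP$-complete under $\leq_{proj}$ (over fields of characteristic $\neq 2$; over characteristic $2$ one would substitute the Hamilton-cycle polynomial, which is $\VNP$-complete in that setting). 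Composing the three reductions and using transitivity of $\leq_{proj}$, every $h\in\VNP$ satisfies $h\leq_{proj}f^{(c)}$, finishing the argument.

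The main obstacle is the membership step: turning a noncommutative circuit for $g$ into a commutative circuit for the set-multilinear image $g^{(c)}$ requires homogenization together with the position-indexed copying of the circuit, and one has to check that product gates behave correctly under the position-shift rule. The hardness step, by contrast, is a short composition of Lemma~\ref{lem:transfer} with Valiant's classical completeness result and the trivial collapse-the-positions projection.
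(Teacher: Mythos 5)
Your proof is correct and the hardness direction is exactly the paper's argument: the paper's entire proof is the one-line composition $\PER\leq_{iproj}f$ (by completeness) followed by Lemma~\ref{lem:transfer} to get $\PER^{(c)}\leq_{proj}f^{(c)}$, with the identification of $\PER^{(c)}$ with the commutative permanent left implicit. The membership step $f^{(c)}\in\VNP$, which you work out via homogenization and a position-indexed circuit, is simply omitted in the paper, so your write-up is if anything more complete than the original.
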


\begin{proof}
 Since $\PER \leq_{iproj}f$, by Lemma \ref{lem:transfer}
 $\PER^{(c)}_{d}\leq_{proj}f^{(c)}$. This completes the proof of the
 theorem.
\end{proof}

\begin{theorem}\label{thm:notcomplete}
 The polynomial $ID$ is not $\vnpnc$-complete under
 $\leq_{iproj}$-reductions.
\end{theorem}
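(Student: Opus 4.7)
The plan is to derive a contradiction via the transfer theorem (Theorem~\ref{thm:transfer}). Suppose for contradiction that $ID$ were $\vnpnc$-complete under $\leq_{iproj}$-reductions. Since $ID=(ID_d)$ is a homogeneous p-family (each $ID_d$ is homogeneous of degree $2d$), Theorem~\ref{thm:transfer} would then imply that its commutative version $ID^{(c)}=(ID^{(c)}_d)$ is $\VNP$-hard under ordinary $\leq_{proj}$-reductions; in particular, $\PER^{(c)}_n \leq_{proj} ID^{(c)}_{p(n)}$ for some polynomial $p$.

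The crux of the plan is the observation that $ID^{(c)}_d$ has an unusually rigid factored form. Unfolding the definition of the commutative version one obtains
\[
ID^{(c)}_d \;=\; \sum_{w\in\{0,1\}^d}\prod_{i=1}^{d} x_{w_i,i}\,x_{w_i,d+i} \;=\; \prod_{i=1}^{d}\bigl(x_{0,i}\,x_{0,d+i}+x_{1,i}\,x_{1,d+i}\bigr),
\]
so $ID^{(c)}_d$ is a product of $d$ quadratic polynomials, each in only $4$ variables. Because a $\leq_{proj}$ substitution replaces variables only by variables or by scalars, it respects this product decomposition factor by factor: the image of $ID^{(c)}_{p(n)}$ under any such substitution is still a product of at most $p(n)$ polynomials, each of degree at most $2$.

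Combining these two steps, $\PER^{(c)}_n$ would have to factor as a product of polynomials of degree $\le 2$. However, $\PER^{(c)}_n$ is, up to relabeling of the position-indexed variables, the classical commutative permanent polynomial, which is irreducible for all $n\ge 2$ and has degree $n$. Hence for $n\ge 3$ at most one factor of such a product can be non-constant, and that factor would have to equal $\PER^{(c)}_n$ up to a unit, contradicting the degree bound $2<n$. This contradiction proves the theorem.

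The step I expect to be the main subtlety is essentially bookkeeping: one must verify that $\PER^{(c)}_n$ really does inherit irreducibility from the standard permanent (the position-indexed variables of the commutativization that happen not to appear in the actual monomials of $\PER^{(c)}_n$ have no bearing on factorization), and that the hypotheses of the transfer theorem are met (the homogeneity of $ID$ is clear). The "projection preserves factorization" step, while intuitive, should also be written out carefully to rule out degenerate substitutions that might send an entire quadratic factor to $0$ --- but any such substitution would make the whole product, and hence $\PER^{(c)}_n$, zero, which is absurd.
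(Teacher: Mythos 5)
Your proposal is correct and follows essentially the same route as the paper: invoke the transfer theorem to get $\PER^{(c)}\leq_{proj} ID^{(c)}$, observe that $ID^{(c)}_d=\prod_{i=1}^{d}(x_{0,i}x_{0,d+i}+x_{1,i}x_{1,d+i})$ factors into quadratics, note that projections preserve this factorization, and contradict the irreducibility of the (relabeled) permanent. The only cosmetic difference is that the paper first passes to an indexed variant $ID'$ with $ID\leq_{iproj}ID'$ before commutativizing, which your direct computation of $ID^{(c)}$ renders unnecessary; your added care about degenerate (zeroing) substitutions and about why $\PER^{(c)}$ inherits irreducibility only makes the argument more complete.
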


\begin{proof}
 Suppose, to the contrary that $ID$ is $\vnpnc$-complete w.r.t
 $\leq_{iproj}$-reductions. Then $\PER \leq_{iproj} ID$. Define the
 noncommutative p-family $ID'=(ID'_n)_{n\geq0}$, where $ID'_n \in
 \F\angle{X_n}$ where
 $X_n=\{x_{0,1},x_{0,2},\ldots,x_{0,n},x_{1,1},x_{1,2},\ldots,x_{1,n}\}$
 and 

\[
ID'_n=\sum_{z_i \in \{x_{0,i},x_{1,i}\}, i \in [n]}z_1z_2\ldots z_nz_1\ldots z_n.
\]

Clearly, $ID \leq_{iproj}ID'$. Hence $\PER \leq_{iproj}ID'$. Applying
the transfer theorem (Theorem \ref{thm:transfer}), we have that $\PER
\leq_{proj}ID'^{(c)}$ in the commutative setting. However,
$ID'^{(c)}=\prod_{i \in
  [n]}(x_{0,i}x_{0,n+i}+x_{1,i}x_{1,n+i})$. Thus, $ID'^{(c)}$ is a
reducible polynomial with factors of degree 2. Since $\PER_n$ is
irreducible for all $n$, it follows that $\PER$ cannot be
$\leq_{proj}$ reducible to $ID'$.
\end{proof}

Assuming the $SOS_k$ conjecture, Theorem \ref{thm:notcomplete} implies
that $ID$ is a $\vnpnc$-intermediate polynomial.

\begin{corollary}
 Assuming $SOS_k$ conjecture, $ID \notin \vpnc$ and $ID$ is not
 $\vnpnc$-complete under $\leq_{iproj}$-reductions.
\end{corollary}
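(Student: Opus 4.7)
The corollary is the conjunction of two already-established facts, so the plan is simply to assemble them. The first conjunct, $ID \notin \vpnc$, is exactly the lower bound of Hrube\v{s}, Wigderson and Yehudayoff \cite{HWY10a}: assuming the $SOS_k$ conjecture, the polynomial $ID_d$ requires noncommutative arithmetic circuits of size exponential in $d$, which in particular rules out polynomial-size circuits. The second conjunct, that $ID$ is not $\vnpnc$-complete under $\leq_{iproj}$-reductions, is exactly Theorem \ref{thm:notcomplete}, which we have already proved unconditionally via the transfer theorem (Theorem \ref{thm:transfer}).

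Concretely, the proof I would write is a single short paragraph: invoke \cite{HWY10a} under the hypothesis $SOS_k$ to get $ID \notin \vpnc$, and invoke Theorem \ref{thm:notcomplete} (no extra hypothesis needed) to get that $ID$ is not $\vnpnc$-complete under $\leq_{iproj}$. Matching these two conclusions against Definition \ref{thm:notcomplete}'s neighbouring notion of $\vnpnc$-intermediate then gives the parenthetical remark preceding the corollary, namely that $ID$ is $\vnpnc$-intermediate.

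There is essentially no obstacle here, since the two halves are logically independent and both have been supplied. The only point worth a sanity check is that the reduction $ID \leq_{iproj} ID'$ used in the proof of Theorem \ref{thm:notcomplete} is in the \emph{correct} direction to conclude non-completeness of $ID$ itself (rather than merely of $ID'$): if $ID$ were $\vnpnc$-complete under $\leq_{iproj}$, then by transitivity and $ID \leq_{iproj} ID'$ the family $ID'$ would also be $\vnpnc$-complete, which contradicts the reducibility/irreducibility argument applied to $ID'^{(c)}$ via Lemma \ref{lem:transfer}. This is the content already present in Theorem \ref{thm:notcomplete}, so nothing new is required.
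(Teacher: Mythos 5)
Your proposal is correct and matches the paper exactly: the corollary is stated there without a separate proof, as the immediate conjunction of the $SOS_k$-conditional lower bound of \cite{HWY10a} (giving $ID \notin \vpnc$) with the unconditional Theorem~\ref{thm:notcomplete} (giving non-completeness under $\leq_{iproj}$). Your added sanity check on the direction of the reduction $ID \leq_{iproj} ID'$ is a fair reading of the proof of Theorem~\ref{thm:notcomplete} but adds nothing new.
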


Now we will show that there
are infinitely many p-families $f^{(i)}$ such that
$f^{(i)}\leq_{iproj}f^{(i+1)}$ but for all $i$
$f^{(i+1)}\nleq_{iproj}f^{(i)}$. For that we need the following
observation that $ID$ is not even $\vpnc$-hard
w.r.t.\ $\leq_{iproj}$-reductions.

\begin{theorem}\label{id-not-hard}
 The p-family $ID$ is not $\vpnc$-hard w.r.t $\leq_{iproj}$-reductions.
\end{theorem}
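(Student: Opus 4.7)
The plan is to apply the contrapositive of Lemma~\ref{lem:transfer}: exhibit a p-family $f \in \vpnc$ whose commutative version $f^{(c)}$ cannot be obtained from $ID^{(c)}$ by a $\leq_{proj}$-reduction. The starting observation is that
\[
ID_m^{(c)} \;=\; \prod_{i=1}^{m}\bigl(y_{0,i}\,y_{0,m+i} \;+\; y_{1,i}\,y_{1,m+i}\bigr),
\]
a product of $m$ homogeneous polynomials of degree $2$. A $\leq_{proj}$-reduction sends each variable either to a target variable or to a field constant, so every one of these degree-$2$ factors descends to a polynomial of degree at most $2$ in the new variables; hence any $\leq_{proj}$-image of $ID^{(c)}_m$ is a product of polynomials of degree at most $2$.

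For the witness p-family I plan to take $f=(f_n)$ with
\[
f_n \;=\; \sum_{i=1}^{n} a_i\, b_i\, c_i
\]
over three disjoint noncommutative variable sets $\{a_i\}$, $\{b_i\}$, $\{c_i\}$. This is homogeneous of degree $3$ and lies in $\vpnc$ via a linear-size circuit; its commutative version is the set-multilinear polynomial $f_n^{(c)} = \sum_{i=1}^{n} a_{i,1}\,b_{i,2}\,c_{i,3}$.

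The crux is to verify that $f_n^{(c)}$ is irreducible for every $n \geq 2$. Since a product of polynomials is multi-homogeneous iff each factor is, any factorisation $f_n^{(c)} = p\cdot q$ with $\deg p,\deg q \geq 1$ forces $p$ and $q$ to be set-multilinear on complementary subsets of the three position-classes $A=\{a_{i,1}\}, B=\{b_{i,2}\}, C=\{c_{i,3}\}$. With total degrees summing to $3$ this leaves the split $\deg p=1,\deg q=2$; in the representative sub-case, writing $p = \sum_i \lambda_i a_{i,1}$ and $q = \sum_{j,k} \kappa_{j,k}\,b_{j,2}\,c_{k,3}$ and matching the coefficient of $a_{i,1}b_{j,2}c_{k,3}$ against $f_n^{(c)}$ forces $\lambda_i\kappa_{j,k} = [i=j=k]$, which for $n \geq 2$ is inconsistent. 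The other sub-splits (where $p$ uses two of $A,B,C$ and $q$ the remaining one, possibly after swapping $p$ and $q$) succumb to the same coefficient-matching argument. Hence $f_n^{(c)}$ is irreducible of degree $3$, so it cannot equal any product of polynomials of degree at most $2$. Combined with the shape of $ID^{(c)}_m$, this yields $f^{(c)} \nleq_{proj} ID^{(c)}$, and the contrapositive of Lemma~\ref{lem:transfer} delivers $f \nleq_{iproj} ID$, as desired.

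The only step with real content is the irreducibility of $f_n^{(c)}$; once that is in place, the degree-versus-factor-structure mismatch makes the rest immediate, and I do not foresee any additional obstacles beyond this short set-multilinear calculation.
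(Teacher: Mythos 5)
Your proof is correct, and its skeleton is the same as the paper's: invoke the transfer lemma to pass to commutative versions, observe that every $\leq_{proj}$-image of $ID^{(c)}_m=\prod_{i=1}^m(y_{0,i}y_{0,m+i}+y_{1,i}y_{1,m+i})$ is a product of polynomials of degree at most $2$, and then exhibit a p-family in $\vpnc$ whose commutative version is irreducible of degree incompatible with such a factorization. Where you differ is in the choice of witness. The paper takes the Dyck family $D_2$ (more precisely its indexed version $\hat D_2$) and proves irreducibility of $\hat D^{(c)}_{2,n}$ by a bracket-swapping argument on monomials; you take $f_n=\sum_i a_ib_ic_i$, whose irreducibility for $n\geq 2$ is a one-line coefficient computation (from $\lambda_1\kappa_{11}=1$, $\lambda_2\kappa_{22}=1$, $\lambda_2\kappa_{11}=0$ one gets an immediate contradiction), and whose odd degree $3$ rules out any product of degree-$\leq 2$ factors even more directly. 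Your route is shorter and, since $f$ lies in $\vbpnc$, actually shows $ID$ is not even $\vbpnc$-hard under $\leq_{iproj}$. What it does not give is the specific statement $D_2\nleq_{iproj}ID$, which the paper extracts from its proof and then reuses to build the infinite hierarchy $f^{(i)}=(D_{2,n}ID_n)^i$ in the following theorem; so if you adopted your witness, that later argument would still need the paper's irreducibility claim for $\hat D^{(c)}_{2,n}$ separately. As a proof of the theorem as stated, yours is complete.
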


\begin{proof}
 We will prove that the Dyck p-family $D_2$ is not
 $\leq_{iproj}$-reducible to $ID$. Suppose $D_2\leq_{iproj}ID$. Since
 the reduction is an indexed projection it follows that the polynomial
 family $\hat{D}_2$ defined below is also $\leq_{iproj}$-reducible
 to $ID$ by essentially the same
 reduction. $\hat{D}_2=(\hat{D}_{2,n})$, where
 $\hat{D}_{2,n}$ is a homogeneous degree $2n$ polynomial on
 variable set of size $4n$ $\{(_i,)_i,[_i,]_i| i \in [n]\}$ where
 $(_i,)_i,[_i$ and $]_i$ are variables that can occur only in $i$-th
 position. The polynomial $\hat{D}_{2,n}$ is defined as an indexed
 projection of $D_{2,n}$ obtained by replacing the $i$-th occurrence of
 a bracket $b\in\{(,),[,]\}$ by its indexed version
 $b_i\in\{(_i,)_i,[_i,]_i\}$. We observe that the p-families
 $\hat{D}_2$ and $D_2$ are $\leq_{iproj}$-reducible to each other.

Now, by assumption $\hat{D}_2\leq_{iproj}ID\leq_{iproj}ID'$ which
means that, by the transfer theorem (Theorem \ref{thm:notcomplete}),
that the commutative version
$\hat{D}_2^{(c)}\leq_{proj}ID'^{(c)}$. Now, we know for all $n$
that $ID_n'^{(c)}=\prod_{i \in
  [n]}(x_{0,i}x_{0,n+i}+x_{1,i}x_{1,n+i})$. We show in the following
claim that the commutative polynomials $\hat{D}^{(c)}_{2,n}$ are
irreducible which rules out $\hat{D}^{(c)}_2\leq_{proj}ID'^{(c)}$,
and hence completes the proof by contradiction.

\begin{claim}
 The polynomial $\hat{D}^{(c)}_{2,n}$ is irreducible for each $n$.
\end{claim}

\noindent\textit{Proof of Claim:}~~Suppose $\hat{D}^{(c)}_{2,n}=g.h$
is a nontrivial factorization. We will derive a contradiction. First,
note that $\hat{D}^{(c)}_{2,n}$ is set-multilinear of degree $2n$
where the $i$-th location is allowed only variables from the set
$\{(_i,)_i,[_i,]_i\}$. Since $\hat{D}^{(c)}_{2,n}$ is multilinear, it
follows that both $g$ and $h$ are homogeneous multilinear and
$Var(g)\cap Var(h)=\emptyset$, where $Var(g),Var(h)$ are the variables
sets of $g$ and $h$ respectively.

Thus, every nonzero monomial $m$ of $f$ has a unique factorization
$m=m_1m_2$, where $m_1$ occurs in $g$ and $m_2$ in $h$. There are no
cancellations of terms in the product $gh$. Hence, it also follows
that both $g$ and $h$ are set-multilinear, where the set of locations
$[2n]$ is partitioned as $S$ and $[2n]\setminus S$ and the monomials
of $g$ are over variables in $\{(_i,)_i,[_i,]_i | i \in S\}$ and $h$'s
monomials are over variables in $\{(_i,)_i,[_i,]_i | i \in
[2n]\setminus S\}$. Now, there are monomials $m$ occurring in
$\hat{D}^{(c)}_{2,n}$ such that the projection of $m$ onto
positions in $S$ does not give a string of matched brackets. Let $m$
be any such monomial. Then we have the factorization $m=m_1.m_2$,
where $m_1$ and $m_2$ are monomials that occur in $g$ and $h$
respectively. Let the monomial $m'$ be obtained from $m$ by swapping
$(_i$ with $[_i$ and $)_i$ with $]_i$. Then $m'=m_1'm_2'$, where
$m_1'$ and $m_2'$ occur in $g$ and $h$, respectively.

Now, since there are no cancellations in the product $gh$, the
monomial $m_1'm_2$ (which is not a properly matched bracket string)
must also occur in $gh$ and hence in $\hat{D}^{(c)}_{2,n}$, which is a
contradiction. This completes the proof of the claim and hence the
theorem.
\end{proof}

We have shown that $ID$ is $\vnpnc$-intermediate assuming $SOS_k$
conjecture. On the other hand, $D_2 \nleq_{iproj}ID$
unconditionally. Our aim is to use $D_2$ and $ID$ to create an
infinite collection $f^{(i)}$ of p-families in $\vnpnc$ such that
$f^{(i)} \leq_{iproj}f^{(i+1)}$ but $f^{(i+1)}
\nleq_{iproj}f^{(i)}$. 

Let $ID=(ID_n)$ where $ID_n$ are degree $2n$, and
$D_2=(D_{2,n})_{n\geq 0}$ where $D_{2,n}$ are degree $2n$.

\begin{itemize}
 \item Define $f^{(1)}=ID$.
\item $f^{(2)}=(f_n^{(2)})$ where $f_n^{(2)}=D_{2,n}ID_{n}$.

\item $f^{(i)}=(f_n^{(i)})=(D_{2,n}ID_{n}\ldots D_{2,n}ID_{n})$, where
  $f_n^{(i)}=f_n^{(i-1)}D_{2,n}ID_{n}$ for all $i$ and $n$.
\end{itemize}

Clearly, $f^{(i)} \in \vnpnc$ for all $i$. 

\begin{proposition}
 For every $i$, $f^{(i)} \leq_{iproj} f^{(i+1)}$, where the $f^{(i)}$
 are the p-families defined above.
\end{proposition}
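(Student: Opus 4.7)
The plan is to construct, for each $i\ge 1$ and each $n$, an indexed projection $\phi$ that when applied to $f^{(i+1)}_n$ yields $f^{(i)}_n$. In each case $f^{(i+1)}_n$ is a product whose factors include $f^{(i)}_n$ (trivially so when $i=1$) together with an extra $D_{2,n}\cdot ID_n$ tail. Consequently every nonzero monomial of $f^{(i+1)}_n$ factors uniquely as a concatenation of a monomial of $f^{(i)}_n$, a balanced-bracket monomial of $D_{2,n}$, and a monomial $ww$ of $ID_n$. This partitions the positions of any monomial of $f^{(i+1)}_n$ into three consecutive segments of lengths $\deg(f^{(i)}_n)$, $2n$, and $2n$, and lets us define $\phi$ independently on each segment.

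On the first segment I would let $\phi$ act as the identity, mapping every variable at every position to itself. On the middle $D_{2,n}$-segment, fix the canonical balanced monomial $w_0$ consisting of $n$ copies of $(_1$ followed by $n$ copies of $)_1$. For a position $j$ in the first half of this segment, let $\phi$ send $(_1$ to $1$ and every other Dyck variable to $0$; for $j$ in the second half, let $\phi$ send $)_1$ to $1$ and every other Dyck variable to $0$. Every balanced monomial other than $w_0$ disagrees with $w_0$ at some position, at which its variable is killed, so the entire $D_{2,n}$-factor collapses to the scalar $1$. On the $ID_n$-segment, set $\phi(x_0)=1$ and $\phi(x_1)=0$ at every position; then only the monomial $x_0^{2n}$ survives and it evaluates to $1$, so the $ID_n$-factor also collapses to $1$. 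Assembling the three segments gives $\phi(f^{(i+1)}_n)=f^{(i)}_n\cdot 1\cdot 1 = f^{(i)}_n$, as required.

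There is no real obstacle: the argument reduces to the observation that both $D_{2,n}$ and $ID_n$ contain a canonical monomial that can be isolated by a position-dependent scalar substitution, while the identity substitution on the $f^{(i)}$ factor preserves it. The only minor bookkeeping concerns the base case $i=1$, where the variable set of $f^{(1)}=ID$ is strictly smaller than that of $f^{(2)}$; but this causes no trouble since the Dyck variables occur only in the $D_{2,n}$-segment of $f^{(2)}$ and are already being projected to constants there, while the $x_0,x_1$ appearing in the $ID_n$-segment map to themselves and are already in the target variable set.
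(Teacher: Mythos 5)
Your proof is correct and uses essentially the same device as the paper: an indexed projection that acts as the identity on the factor to be preserved and isolates one canonical monomial ($(_1^n)_1^n$ in $D_{2,n}$, $x_0^{2n}$ in $ID_n$) in the factors to be collapsed, sending it to the scalar $1$ and killing every other monomial position by position. The only wrinkle is that for $i=1$ the paper's definition gives $f^{(2)}_n=D_{2,n}ID_n$ rather than $ID_n\cdot D_{2,n}\cdot ID_n$, so there the roles must be swapped---collapse the leading $D_{2,n}$ and keep the trailing $ID_n$ identically---which is exactly what the paper does and what your closing remark about the base case amounts to.
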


\begin{proof}
We explain the easy proof for $f^{(1)}\leq_{iproj} f^{(2)}$ which can
be easily extended to all $i$. The indexed projection that gives a
reduction from $f_n^{(1)}$ to $f_n^{(2)}$ will simply substitute $1$
for the variables $($ occurring in positions $1\le i\le n$, and $1$
for the variables $)$ occurring in positions $n+1\le i\le 2n$. For all
other occurrences of the variables of $D_{2,n}$ the indexed projection
substitutes $0$. This substitution picks out the following unique
degree-$2n$ monomial in $D_{2,n}$
\[
\underbrace{(((\cdots((}_{n-times}\underbrace{))\cdots)))}_{n-times}
\]
 in the polynomial $D_{2,n}$ and gives it the value $1$, and it zeros
 out the remaining monomials of $D_{2,n}$.

Finally, the indexed projection substitutes $x$ for $x$, for each
variable $x$ occurring in the polynomial $ID_n$.
\end{proof}

\begin{theorem}
Assuming the $SOS_k$-conjecture, for every $i$, we have $f^{(i+1)}
\nleq_{iproj} f^{(i)}$.
\end{theorem}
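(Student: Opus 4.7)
The plan is to suppose, for contradiction, that $f^{(i+1)}\leq_{iproj}f^{(i)}$, and to apply Lemma~\ref{lem:transfer} to obtain a commutative projection $\phi$ witnessing $f^{(i+1),(c)}_n=\phi\bigl(f^{(i),(c)}_{p(n)}\bigr)$ for some polynomial $p(n)$. When each noncommutative product is rewritten in position-indexed commutative variables, its multiplicands sit on pairwise disjoint variable sets. Hence $f^{(i+1),(c)}_n$ is a product of $i$ copies of $D^{(c)}_{2,n}$ --- each irreducible of degree $2n$, by the Claim in the proof of Theorem~\ref{id-not-hard} --- times $i$ copies of $ID^{(c)}_n=\prod_{j=1}^{n}(x_{0,j}x_{0,n+j}+x_{1,j}x_{1,n+j})$, so $f^{(i+1),(c)}_n$ has exactly $i$ irreducible factors of degree $2n$ and $in$ of degree $2$. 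Analogously, $f^{(i),(c)}_{p(n)}$ has $i-1$ irreducibles of degree $2p(n)$ and $(i-1)p(n)$ of degree $2$.

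For $n\geq 2$, the image of a quadratic under a projection has only degree-$\leq 2$ irreducible factors, so each of the $i$ target degree-$2n$ irreducibles must appear as an irreducible divisor of the image $\phi\bigl(D^{(c,k)}_{2,p(n)}\bigr)$ of some source Dyck factor. Because there are only $i-1$ source Dyck factors, pigeonhole forces some $k$ such that $\phi\bigl(D^{(c,k)}_{2,p(n)}\bigr)$ is divisible by two disjoint-variable target Dyck factors $D^{(c,a)}_{2,n}$ and $D^{(c,b)}_{2,n}$. I rule out this ``double cover'' by using that $\phi$ is \emph{position-preserving}, i.e.\ a source variable at position $i$ is replaced either by a scalar or by a variable at target position $i$ (this follows from the definition of $\phi'$ in the proof of Lemma~\ref{lem:transfer}). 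Hence the image of $D^{(c,k)}_{2,p(n)}$ is supported on target positions lying within the source interval of $D^{(c,k)}_{2,p(n)}$'s positions, which forces $D^{(c,a)}_{2,n}$, $D^{(c,b)}_{2,n}$, and the target $ID^{(c)}_n$-blocks sandwiched between them into a single source-determined interval. A case analysis on how source Dyck brackets are routed into the target positions should then show that any $\phi$ meeting these simultaneous constraints --- matching target Dyck balance \emph{and} the target ID $ww$-pairings --- secretly encodes an indexed projection from $D_2$ into $ID$, contradicting Theorem~\ref{id-not-hard}.

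The main obstacle is making this combinatorial argument robust against the reducer's choice of $p(n)$. At the minimum $p(n)=\lceil ni/(i-1)\rceil$ there is a direct alphabet-count obstruction: a source $ID$-position carries only the two variables $x_0,x_1$, while a middle position of a target Dyck block requires all four brackets $(_1,)_1,(_2,)_2$ to appear across the image. For larger $p(n)$, the reducer can push extra source positions to scalars and attempt to reroute around this obstruction; bounding this possibility uniformly is where the $SOS_k$ hypothesis enters, ensuring (via \cite{HWY10a}) that $ID\notin\vpnc$ and hence that no implicit indexed reduction from $D_2\in\vpnc$ to $ID$ can exist --- thereby blocking the rerouting and completing the contradiction.
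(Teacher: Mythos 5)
Your overall skeleton --- assume $f^{(i+1)}\leq_{iproj}f^{(i)}$, pass to commutative versions, exploit unique factorization of the set-multilinear images, and pigeonhole the target factors against the fewer source factors --- matches the paper's strategy, and the observation that a degree-$2$ source factor cannot contribute a high-degree irreducible target factor (so every target Dyck irreducible must divide the image of some source Dyck factor) is a sound refinement of the paper's pigeonhole. But the proposal stops exactly where the proof has to do its work. After the pigeonhole you are left with one source Dyck factor whose image is divisible by two position-disjoint target Dyck blocks, and you then defer to ``a case analysis \ldots{} should then show'' a contradiction; that case analysis is the entire remaining content and is never carried out. The clean finish from your own setup is: the offending source Dyck factor's position interval must also absorb an entire intervening $ID_n$ block, which exhibits $ID\leq_{iproj}D_2$; since $D_2\in\vpnc$ this places $ID$ in $\vpnc$, contradicting the $SOS_k$ conjecture via \cite{HWY10a}. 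The paper instead runs the dual pigeonhole (some target block $D_{2,n}ID_n$ must land wholly inside a single source factor) and closes with two cases: landing inside $ID_{p(n)}$ contradicts Theorem~\ref{id-not-hard}, landing inside $D_{2,p(n)}$ contradicts the $SOS_k$ conjecture.

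There is also a genuine conceptual error in your last paragraph: you assert that the $SOS_k$ hypothesis is what rules out ``an indexed reduction from $D_2$ to $ID$.'' That direction, $D_2\nleq_{iproj}ID$, is Theorem~\ref{id-not-hard} and is unconditional; what $SOS_k$ (via $ID\notin\vpnc$) actually forbids is the \emph{opposite} direction, $ID\nleq_{iproj}D_2$, and that is the case your argument must invoke it for. The ``alphabet-count obstruction'' and the ``rerouting'' discussion do not substitute for either case. Finally, the claim that the transferred projection is position-preserving in the strong sense you use (source position $i$ maps to target position $i$) is not literally true once scalar substitutions shift positions; only monotonicity of the position correspondence is available, and the interval argument needs to be phrased in those terms. (A minor point: your factor counts, $i$ target Dyck blocks versus $i-1$ source ones, are off by one relative to the paper's convention of $i+1$ versus $i$, though the gap of one that drives the pigeonhole survives.)
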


\begin{proof}
 Suppose to the contrary that $f^{(i+1)} \leq_{iproj} f^{(i)}$. Then
 there is a polynomial $p(n)$ and indexed projection map $\phi_n$ s.t
 $f_{p(n)}^{(i)}(\phi_n(X_{p(n)}^{(i)}))=f_{n}^{(i+1)}(X_{n}^{(i+1)})$,
 where $X_{p(n)}^{(i)}=Var(f_{p(n)}^{(i)})$ and
 $X_{n}^{(i+1)}=Var(f_{n}^{(i+1)})$. Now, we will derive a
 contradiction from this. We have:
\begin{itemize}
 \item $f_{p(n)}^{(i)}=\underbrace{D_{2,p(n)}ID_{p(n)}\ldots D_{2,p(n)}ID_{p(n)}}_{i-times}$
\item $f_{n}^{(i+1)}=\underbrace{D_{2,n}ID_{n}\ldots D_{2,n}ID_{n}}_{(i+1)-times}$
\end{itemize}
Since $ID_n \nleq_{iproj} D_{2,n}$ (by \cite{HWY10a} assuming
$SOS_k$-conjecture), we have $D_{2,n}ID_n \nleq_{iproj} D_{2,p(n)}$
and $D_{2,n}ID_n \nleq_{iproj} ID_{2,p(n)}$ because of irreducibility
of $\hat{D}^{(c)}_{2,n}$ (as shown in Theorem
\ref{thm:notcomplete}). Hence $D_{2,n}ID_n$ must get mapped by the
projection $\phi_n$ to the product $D_{2,p(n)}ID_{p(n)}$ or
$ID_{p(n)}D_{2,p(n)}$, overlapping both factors. But $f^{(i+1)}_n$ has
$(i+1)$ such factors $D_{2,n}ID_n$. Hence, at least one of these
factors $D_{2,n}ID_n$ must map wholly to $ID_{p(n)}$ or $D_{2,p(n)}$
by the indexed projection $\phi_n$. If $D_{2,n}ID_n$ maps to
$ID_{p(n)}$ that contradicts Theorem~\ref{id-not-hard}. If
$D_{2,n}ID_n$ maps to $D_{2,p(n)}$ then $ID_n$ must be in $\vpnc$,
which is not true assuming the $SOS_k$ conjecture.
\end{proof}

\section{Inside $\vpnc$}

We first show that $D_2$ is strictly harder than $\PAL$ w.r.t
$\leq_{abp}$-reductions.

\begin{theorem}\label{pal-to-d2}
 $\PAL \leq_{abp}D_2$ but $ D_2\nleq_{abp} \PAL$.
\end{theorem}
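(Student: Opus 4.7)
The first claim $\PAL \leq_{abp} D_2$ admits a quick abstract proof: by Theorem~\ref{pal-vsk}, $\PAL \in \vsk \subseteq \vpnc$, and by Theorem~\ref{thm:vpcomplete}, $D_2$ is $\vpnc$-complete under $\leq_{abp}$, hence $\PAL \leq_{abp} D_2$. For concreteness one can also record a direct reduction that is illuminating: $\PAL_n$ is exactly the image of the ``strictly nested'' part of $D_{2,n}$ under the bracket-to-letter map sending $(_1, )_1$ to $x_0$ and $(_2, )_2$ to $x_1$. Strictly nested Dyck words of length $2n$ take the form $b_1 b_2 \cdots b_n \bar{b}_n \bar{b}_{n-1} \cdots \bar{b}_1$, where each $b_i \in \{(_1, (_2\}$ and $\bar{b}_i$ denotes its matching closing bracket; these are in natural bijection with palindromes $w w^R$ of length $2n$.

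To realise this as an $\leq_{abp}$-reduction, the plan is to construct a deterministic finite substitution automaton on $O(n)$ states which tracks the current position $i \in \{0,1,\dots,2n\}$: while $i < n$, it accepts only opening brackets (substituting $x_0$ for $(_1$ and $x_1$ for $(_2$); while $i \geq n$, it accepts only closing brackets (substituting $x_0$ for $)_1$ and $x_1$ for $)_2$); all remaining transitions lead to a dead state where the substitution is $0$. The substitution-automaton machinery from the two Remarks in Section~2 then produces polynomial-size matrices $M_{(_1}, M_{)_1}, M_{(_2}, M_{)_2}$ such that the source-to-sink entry of $D_{2,n}(M_{(_1}, M_{)_1}, M_{(_2}, M_{)_2})$ equals $\PAL_n$, giving the desired abp-reduction.

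For the second claim $D_2 \nleq_{abp} \PAL$, the plan is to argue by contradiction using the closure of $\vsk$ under $\leq_{abp}$-reductions. Suppose $D_2 \leq_{abp} \PAL$. Since $\PAL \in \vsk$ (Theorem~\ref{pal-vsk}) and $\leq_{abp}$ preserves polynomial-size skew circuits (the unlabelled Proposition immediately following the definition of $\leq_{abp}$), one concludes $D_2 \in \vsk$. By Theorem~\ref{thm:vpcomplete}, every $f \in \vpnc$ satisfies $f \leq_{abp} D_2$; combined with the transitivity of $\leq_{abp}$ (Proposition~\ref{transitive-abp}) and the same preservation property, this gives $f \in \vsk$ for every $f \in \vpnc$. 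Hence $\vpnc \subseteq \vsk$, contradicting the strict inclusion $\vsk \subsetneq \vpnc$ recorded in the Preliminaries (a consequence of Nisan's separation of noncommutative circuits from ABPs/skew circuits).

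The only genuine obstacle is checking the details of the substitution-automaton bijection in the direct reduction for the first part (in particular, that exactly the strictly nested monomials of $D_{2,n}$ survive with the intended coefficients and that non-strictly-nested monomials are zeroed out by the dead state); the second part is then essentially automatic, resting on the $\vpnc$-completeness of $D_2$ and the strict inclusion $\vsk \subsetneq \vpnc$.
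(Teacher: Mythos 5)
Your proof of the first claim is essentially the paper's: the paper likewise notes that $\PAL\in\vpnc$ plus $\vpnc$-completeness of $D_2$ already gives $\PAL\leq_{abp}D_2$, and then records the same direct reduction (a position-counting substitution automaton that keeps exactly the strictly nested monomials $b_1\cdots b_n\bar b_n\cdots\bar b_1$ of $D_{2,n}$ and maps opening/closing brackets of the two types to $x_0,x_1$). For the second claim your route is a mild abstraction of the paper's: the paper shows concretely that $\PAL^2=(\PAL_n\PAL_n)$ is $\leq_{abp}$-reducible to $D_2$, so $D_2\leq_{abp}\PAL$ would give $\PAL^2\leq_{abp}\PAL$ and hence polynomial-size skew circuits for $\PAL^2$, contradicting the exponential skew-circuit lower bound of \cite{LMS15}; you instead derive $\vpnc\subseteq\vsk$ from completeness of $D_2$ and contradict the strict inclusion $\vsk\subsetneq\vpnc$. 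These are the same argument, since that strict inclusion is itself witnessed precisely by $\PAL^2$ via \cite{LMS15}.

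The one point you should fix is the parenthetical justification of $\vsk\subsetneq\vpnc$ as ``a consequence of Nisan's separation.'' That is not correct: Nisan's rank method separates ABPs from circuits, and his hard polynomial $\PAL$ actually \emph{has} polynomial-size skew circuits (that is exactly why $\PAL$ is $\vsk$-complete), so his technique does not separate skew circuits from general circuits. The separation $\vsk\subsetneq\vpnc$ is the main theorem of \cite{LMS15} (exponential skew-circuit lower bounds, e.g.\ for $\PAL^2$), which is the ingredient the paper's proof cites explicitly. Your argument is sound once you attribute the separation to \cite{LMS15} (or, as the paper does, instantiate it with the concrete witness $\PAL^2\leq_{abp}D_2$).
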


\begin{proof}
 As $\PAL$ has polynomial size circuit, clearly $\PAL \leq_{abp}D_2$ since $D_2$ is $\vpnc$-complete. For clarity, we give a direct reduction below. Consider $\PAL_n=\sum_{w \in
   \{x_0,x_1\}^n}w.w^R$ and $D_{2,n}$. The idea is to encode monomial
 $ww^R$ by encoding $x_0$ as $($ and $x_1$ as $[$ for position $i \in
   [n]$ and $x_0$ as $)$ and $x_1$ as $]$ for position $i \in
 [n+1,2n]$. We can easily design an automaton with $O(n)$ states that
 replaces $($ in $i$-th position by $x_0$ and $[$ in $i$-th position
   by $x_1$ for $i \in [n]$ and if it sees a closing bracket in any
   positions $i \in [n]$ it replaces it by $0$. Similarly, the position
   from $n+1,\ldots,2n$ are handled by replacing $)$ in $i$-th
   position by $x_0$ and $]$ in $i$-th position by $x_1$ and anything
 else by 0. The matrices defining these substitutions give the desired
 abp-reduction, which we explain now.

As in Theorem \ref{thm:vpcomplete}, we convert this automaton into a
ABP $A$ computing the homogeneous polynomial of degree $2n$. We now
describe matrices we substitute for each parenthesis. Let $M_p$ be the
matrix we substitute for parenthesis $p$ whose rows and columns are
labelled by nodes of this ABP A.

We define the matrix $M_p$ for parenthesis $p$ as follows:


\begin{displaymath}
m_{u,v}=M_p[u,v]= \left\{ 
\begin{array}{ll} x_0 & \textrm{if $p \in
    \{(,)\}$ and $\exists e=(u,v) \in$ E(A) and label of $e$ is in $
    \{(,)\}$ }\\ x_1 & \textrm{if $p \in \{[,]\}$ and $\exists e=(u,v)
    \in$ E(A) and label of $e$ is in $ \{[,]\}$ }
\end{array} 
\right.
\end{displaymath}
where E(A) is the edge set of the automaton A.

We now turn to the converse problem. In fact, we only need to observe
that $\PAL^2$ is also $\leq_{abp}$-reducible to $D_2$, where $\PAL^2$
is the square of the Palindrome
polynomial. I.e.\ $\PAL^2=(\PAL_n\PAL_n)_{n\geq0}$. We can easily
reduce $\PAL_n\PAL_n$ to $D_{2,2n}$ by repeating the automaton
construction giving $\PAL_n$ from $D_{2,n}$ twice. The automaton will
zero out all monomials of $D_{2,2n}$ except those of the form
$u_1.u_2$ where $u_1$ has an equal number of $($ and $)$ and equal
number of [ and ] and similarly $u_2$.

Furthermore, while reading $u_1$ the automaton will do exactly as the
reductions of $\PAL_n$ to $D_{2,n}$ and also for $u_2$ the same. This
will yield the polynomial $\PAL_n\PAL_n$. Hence $\PAL^2
\leq_{abp}D_2$. However, $\PAL^2\nleq_{abp}\PAL$ because, as shown in
\cite {LMS15}, skew circuits computing $\PAL^2$ require exponential
size. This completes the proof sketch.
\end{proof}

\subsection{Dyck depth hierarchy inside $\vpnc$}

We now show that the nesting depth of Dyck polynomials can be used to
obtain a strict hierarchy of p-families within $\vpnc$. This
hierarchy roughly corresponds to the $\vncnc$ hierarchy.

\begin{definition}
A p-family $f=(f_n)$ is in $\vncnc^i$ if there is a family of circuits
$(C_n)$ for $f$ such that each $C_n$ is of polynomial size and degree,
and is of $\log^in$ depth.
\end{definition}

The classes $\vncnc^i, i=1,2,\ldots$ are contained in
$\vpnc$. Furthermore, it is easy to show using Nisan's rank argument
that $\vncnc^i, i=1,2,\ldots$ form a strict
hierarchy.\footnote{Palindromes of length $\log^{i+1}n$ have circuits
  of depth $\log^{i+1} n$ and polynomial in $\log^{i+1}n$
  size. However, circuits of depth $\log^{i} n$ for it require
  superpolynomial size.}

It turns out that Dyck polynomials of nesting depth $\log^{i+1} n$ are
hard for $\vncnc^i$ w.r.t.\ $\leq_{abp}$ reductions. Indeed, this follows
from inspection of the proof of Theorem~\ref{thm:vpcomplete}.

\begin{definition}[Nesting depth]
The nesting depth of a string in $D_2$ is defined as follows:
\begin{itemize}
 \item $()$ and $[]$ have depth 1.
 \item If $u_1$ has depth $d_1$ and $u_2$ has depth $d_2$, $u_1u_2$
   has depth $max\{x_1,d_2\}$ and $(u_1),[u_1]$ have depth $d_1+1$.
 \end{itemize} 
\end{definition}

Let $W^{(k)}_{2,n}$ denote the set of all monomials in $D_{2,n}$ of
depth at most $k$ and degree $2n$. We define the polynomial
$D^{(k)}_{2,n}=\sum_{u \in W^{(k)}_{2,n}}u$ and denote the
corresponding p-family as $D^{(k)}_{2}$. In this definition we allow
$k$ to be a function $k(n)$ of $n$, where
$D^{(k)}_{2}=(D^{(k)}_{2,n})_{n\geq0}$.

\begin{theorem}
 Let $k_1=\omega(\log n)$ and $k_2(n)\geq \omega(k_1(n))$ for all $n$. Then $D_2^{k_2}\nleq_{abp}D_2^{k_1}$ but $D_2^{k_1}\leq_{abp}D_2^{k_2}$.
\end{theorem}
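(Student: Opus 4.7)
The plan is to establish the hierarchy via Nisan's rank lower bound on noncommutative ABPs. For the easy direction $D_2^{k_1}\leq_{abp}D_2^{k_2}$, I would construct a substitution automaton of $O(k_1)$ states that tracks the current bracket balance and kills any monomial whose balance ever exceeds $k_1$; substituting the corresponding transition matrices for each bracket variable of $D_2^{k_2}$ produces $D_2^{k_1}$ in the designated corner of the resulting matrix, exactly in the style of the automaton construction used in Theorem~\ref{thm:vpcomplete}.

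For the hard direction $D_2^{k_2}\nleq_{abp}D_2^{k_1}$, the first step is to compute Nisan's rank $r_n(D_{2,n}^k)$ at the middle split. Two length-$n$ prefixes induce identical rows of the Hankel matrix iff they end in the same \emph{stack configuration}, i.e.\ the ordered sequence of unmatched open brackets; so $r_n(D_{2,n}^k)$ equals the number of stack configurations of depth at most $k$ over two bracket types, namely $\sum_{j=0}^{k}2^{j}=\Theta(2^k)$. In particular, any ABP for $D_{2,n}^k$ requires size $\Omega(2^k)$.

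The second ingredient is a lemma showing that an abp-reduction $f\leq_{abp}g$ via $q\times q$ matrix substitutions inflates Nisan rank by at most a factor of $q$: writing $f(X)=[g(\phi(Y))]_{1,q}$ and applying the rank decomposition of $g$'s Hankel matrix at the corresponding position, the split of $f$ at position $m$ factors through a single intermediate matrix-entry index, giving $r_m(f)\leq q\cdot\max_{m'}r_{m'}(g)$. Combining the two, any reduction $D_2^{k_2}\leq_{abp}D_2^{k_1}$ with polynomial-size matrices $q(n)=n^{O(1)}$ and polynomial resizing $p(n)$ would force $2^{k_2(n)}\leq \mathrm{poly}(n)\cdot 2^{k_1(p(n))}$, hence $k_2(n)\leq k_1(p(n))+O(\log n)$. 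Since $k_1=\omega(\log n)$ and $k_1$ is polynomially bounded (as required for a p-family), this forces $k_2(n)=O(k_1(n))$, contradicting $k_2=\omega(k_1)$.

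The main obstacle I anticipate is making the Nisan-rank preservation lemma fully rigorous in the abp setting. Because matrix entries lie in $X\cup\F$ rather than being pure variables, a split of a monomial of $f$ at a given position $m$ may arise from several different positions in the underlying monomial of $g$ (each matrix-path step can contribute degree $0$ or $1$ to $f$ depending on whether the entry is a scalar or a variable), and one must check that the cumulative Hankel-rank blowup over all such induced splits is still just the single factor of $q$ coming from the intermediate matrix index rather than a higher polynomial in $q$.
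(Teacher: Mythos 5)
Your proposal is correct in substance and proves the right statement, but the lower-bound direction takes a genuinely different route from the paper's. The paper never computes the Hankel rank of $D^{(k)}_{2,n}$ directly: it (a) exhibits an explicit ABP of size $2^{k_1}\cdot\poly(n)$ for $D^{(k_1)}_{2,p(n)}$ by storing the current stack contents in the state, (b) notes that substituting $q\times q$ matrices into an ABP yields an ABP whose size grows only by a factor $q$, so a reduction would give a $2^{k_1}\cdot\poly(n)$-size ABP for $D^{(k_2)}_{2,n}$, and (c) derives the contradiction by reducing $\PAL_{k_2}$ to $D^{(k_2)}_{2,n}$ (working only on a length-$2k_2$ prefix) and invoking Nisan's $2^{\Omega(k_2)}$ ABP lower bound for palindromes. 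You replace (a)--(b) by a rank-transfer lemma for $\leq_{abp}$ and (c) by a direct count of reachable stack configurations at the middle cut; by Nisan's characterization (minimal ABP size equals $\sum_m r_m$) the two routes are equivalent, and yours is more self-contained since it avoids the detour through $\PAL$. (For the rank count, note that ``rank equals number of distinct rows'' needs the extra observation that distinct configurations yield linearly independent rows, which follows because the canonical exact-closing completions pick out an identity submatrix.) Your easy direction coincides with the paper's counter automaton.

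Two technical points. First, the obstacle you flag in the rank-transfer lemma is real but harmless: decomposing the Hankel matrix of $f$ at split $m$ requires summing over the intermediate matrix index $j\in[q]$ \emph{and} over the split position $m'$ of the underlying monomial of $g$ (plus the $O(1)$ offsets inside a constant-degree entry), so the honest bound is $r_m(f)\le q\cdot(\deg g+1)\cdot O(1)\cdot\max_{m'}r_{m'}(g)$ rather than $q\cdot\max_{m'}r_{m'}(g)$; since the degree is polynomial this changes nothing downstream. The cleanest rigorous version is exactly the paper's step (b): pass to the minimal ABP for $g$ and substitute the matrices edge by edge. Second, your last inference --- from $k_2(n)\le k_1(p(n))+O(\log n)$ to $k_2(n)=O(k_1(n))$ --- requires $k_1(n^{O(1)})=O(k_1(n))$, which does not follow from $k_1$ being polynomially bounded; indeed for $k_1(n)=\sqrt n$ and $k_2(n)=n$ one \emph{can} reduce $D_2^{(k_2)}$ to $D_2^{(k_1)}$ by padding (take $p(n)=n^2$ and select monomials $u\cdot()^{n^2-n}$ with $|u|=2n$). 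The paper's proof silently makes the same identification of $k_1(p(n))$ with $k_1(n)$, so this is an imprecision in the theorem statement rather than a defect specific to your argument; it disappears under the intended reading where $k_2(n)\le n$ and $k_i(n^{O(1)})=O(k_i(n))$, e.g.\ for polylogarithmic depth bounds.
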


\begin{proof}
 Suppose $D_2^{(k_2)} \leq_{abp} D_2^{(k_1)}$. Then there are
 polynomials $p(n)$ and $q(n)$ such that there is a matrix
 substitution $\phi_n$ for the variables $X$ of $D_{2,p(n)}^{(k_1)}$
 with the property that
\[ 
D_{2,p(n)}^{(k_1)}(\phi_n(X))(1,q(n))=D_{2,n}^{(k_2)}, 
\]
where $\phi_n$ is a $q(n) \times q(n)$ matrix substitution for each
variable in $X$. Now, the polynomial $D_{2,p(n)}^{(k_1(n))}$ has an
ABP of size $2^{k_1(n)}.poly(n)$ (this ABP can be constructed by
keeping the stack content as part of the DFA state for stack size at
most $k_1(n)$). Combined with the matrix substitutions $\phi_n$, we
obtain a $2^{k_1(n)}.\poly(n)$ size ABP for the polynomial
$D_{2,n}^{(k_2(n))}$.

Furthermore, the reduction from $\PAL$ to $D_2$
(Theorem~\ref{pal-to-d2}) can be easily modified to show that
$\PAL_{k_2}\leq D_{2,n}^{(k_2(n))}$ (the reduction will work only with
the prefixes of length $2k_2(n)$ of $D_{2,n}^{(k_2(n))}$ and
substitute rest by 1, if the prefix has same number of left and right
brackets and 0 otherwise).

But by Nisan's \cite{N91} rank argument $\PAL_{k_2}$ requires
$2^{\Omega(k_2)}$ size ABPs contradicting the above
$2^{k_1(n)}.poly(n)$ size ABP.

We now show the reduction $D_2^{k_1}\leq_{abp}D_2^{k_2}$. We design a
DFA with $O(n.k_1(n))$ states that takes strings $u$ of length $2n$
over $\{(,),[,]\}$ with an equal number of ( \& ) and an equal number
of [ \& ] s.t in every prefix $s$ of $u$, the number of left brackets
exceed the number of right brackets by at most $k_1(n)$.

Corresponding to this DFA we can create matrix substitutions which
replace each variable $x \in \{(,),[,]\}$ by itself if the string is
accepted and otherwise, the $(2n)$-th variable by 0. Let $\phi_n$
define this matrix substitution. Then
$D_{2,n}^{k_2(n)}\phi_n(X))=D_{2,n}^{k_1(n)}$, where
$X=Var(D_{2,n}^{k_2(n)})$. This completes the proof.
\end{proof}

\section{More on $\vnpnc$-Completeness}
\label{sec:more_vnpc}
Apart from the polynomial family $\PER_d$, we know from \cite{AS10} that the polynomial family $\DET_d$ is $\vnpnc$-complete for $\abp$-reductions. 
In this section we show some new $\vnpnc$-complete p-families
w.r.t.\ $\le_{abp}$ reductions and raise some open questions. In
Theorem~\ref{thm:notcomplete} we saw that $ID$ is not
$\vnpnc$-complete w.r.t.\ $\le_{iproj}$ reductions. However, we do not
know if $ID$ is $\vnpnc$-complete w.r.t.\ $\le_{abp}$ reductions.

Motivated by this question we consider a generalized version of
$ID$ which we call $ID^*$ defined as follows:

For each positive integer $n$, let $W_n$ denote the set of all degree
$n$ monomials of the form $x_{1,i_1}\ldots x_{n,i_n}$, over the
variable set $\{x_{ij}\mid 1\le i,j\le n\}$.
\[
ID^*_n=\sum_{w \in W_n}\underbrace{ww\dots w}_{n^2-times}.
\]

\begin{theorem}
\label{thm:more_id*}
 $\PER \leq_{abp} ID^*_n$. 
\end{theorem}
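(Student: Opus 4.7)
The strategy is to apply the substitution-automaton framework established in the second Remark after the Hadamard product theorem, instantiated with $f = ID^*_n$ (so we take $p(n)=n$ in the reduction). The plan is to design a polynomial-size deterministic substitution automaton $P$ over the variable set $Y_n = \{y_{ij} : 1 \le i,j \le n\}$ of $ID^*_n$, together with a substitution map $\psi$ into monomials over the variable set $X_n = \{x_{ij}\}$ of $\PER_n$, so that $P$ accepts a monomial $m$ with $ID^*_n(m) \ne 0$ iff $m = w^{n^2}$ for $w = y_{1,\sigma(1)}\ldots y_{n,\sigma(n)}$ with $\sigma \in S_n$, and $\psi(m) = x_{1,\sigma(1)}\ldots x_{n,\sigma(n)}$ for each such $m$. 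The framework then guarantees that the $(s,t)$-entry of $ID^*_n(M'_{y_{11}},\ldots,M'_{y_{nn}})$ equals $\sum_{\sigma\in S_n}\prod_{i=1}^n x_{i,\sigma(i)} = \PER_n$, with matrix entries drawn from $X_n \cup \{0,1\} \subseteq X_n \cup \F$, which is precisely the required $\leq_{abp}$ reduction.

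I would structure the automaton $P$ into $n^2$ consecutive \emph{phases}, each consuming $n$ input symbols, mirroring the $n^2$ copies of $w$ inside every monomial of $ID^*_n$. States have the form $(k, t, \cdot)$, where $k \in [n^2]$ denotes the phase and $t \in \{0,\ldots,n-1\}$ tracks the position within the phase. Every phase enforces the format constraint that the $t$-th symbol read equals $y_{t+1, j}$ for some $j$, sending any other input to a dead state. In phase $1$ the substitution on an edge reading $y_{t+1,j}$ outputs $x_{t+1,j}$, so phase $1$ contributes exactly the monomial $x_{1,\sigma(1)}\ldots x_{n,\sigma(n)}$ of $\PER_n$. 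For $k = 2, \ldots, n+1$, phase $k$ carries an additional three-valued counter tracking how many of the $n$ symbols read in that phase have second subscript equal to $k-1$, and transitions into phase $k+1$ only if this count equals exactly $1$. In every phase $k \ge 2$ the substitution outputs $1$, so these phases contribute only multiplicative identities; phases $n+2$ through $n^2$ are plain read-throughs with substitution $1$, merely consuming the remaining copies of $w$.

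Because of the format constraint inside each phase combined with the $ID^*_n$-monomial structure $w^{n^2}$ forcing the same $w$ across all phases, the monomials accepted by $P$ with nonzero coefficient in $ID^*_n$ are exactly those for which the sequence of second subscripts $(\sigma(1),\ldots,\sigma(n))$ is a permutation of $[n]$, and the substitution along the unique accepting path yields $\psi(w^{n^2}) = x_{1,\sigma(1)}\ldots x_{n,\sigma(n)}$ as required. The number of states is $O(n^3)$, hence polynomial, every entry of the substitution matrices lies in $X_n \cup \{0,1\}$, and plugging these matrices into $ID^*_n$ then yields $\PER_n$ at the designated entry.

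The main obstacle and also the key conceptual point is the decomposition of the global test ``$w$ encodes a permutation'' into $n$ \emph{independent} local tests of the form ``the value $k \in [n]$ appears exactly once in $w$'', each of which needs only a constant amount of extra state on top of the phase/position counters. A single-copy permutation check would require exponentially many states (essentially Nisan's rank bound, the same obstruction underlying exponential ABP lower bounds for $\PER$ itself), so the $n^2$-fold repetition baked into $ID^*$ is exactly what lets us amortize the permutation check over $n$ independent single-value checks. Once this decomposition is in hand, the rest is mechanical bookkeeping of the automaton's transitions and the substitution map.
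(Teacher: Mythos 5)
Your proposal is correct and follows the paper's strategy: both build a polynomial-size deterministic substitution automaton over the $n^{2}$ blocks of $ID^{*}_{n}$, use the fact that nonzero monomials of $ID^{*}_{n}$ force all blocks to carry the same index sequence, distribute the ``is a permutation'' test across the blocks so that each block only performs a constant-memory (or $O(n)$-state) local check, and substitute so that the first block emits $x_{1\sigma(1)}\cdots x_{n\sigma(n)}$ while all later blocks emit $1$. The one genuine difference is the decomposition of the permutation test. The paper indexes block $i$ by a pair $(j,k)$ and checks $i_j\neq i_k$ there, so it uses all $n^{2}$ blocks and must remember the value $i_j$ (an $O(n)$-state memory) inside each block; you instead devote block $k$ (for $2\le k\le n+1$) to verifying that the value $k-1$ occurs exactly once among the second subscripts, which needs only a three-valued counter per block and leaves the remaining blocks as read-throughs. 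Both tests are equivalent characterizations of bijectivity, both automata are polynomial-size, and the Hadamard-product/substitution framework then yields $\PER_n$ in the designated matrix entry in either case; your variant is marginally more economical in states, the paper's is marginally more symmetric in how it uses the $n^{2}$ copies. Your closing observation --- that the repetition in $ID^{*}$ is precisely what allows the permanent's globally-entangled constraint to be amortized into independent local checks, circumventing the exponential ABP lower bound for a single copy --- is exactly the right conceptual point and matches the paper's motivation for introducing $ID^{*}$.
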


\begin{proof}
Consider the permanent polynomial $\PER_n$ defined on the variable set
$V_n=\{x_{ij}\mid 1\le i,j\le n\}$. We design a polynomial in $n$
sized deterministic automaton A with the following properties:

\begin{enumerate}
\item It takes inputs $w_1w_2\dots w_{n^2}$ over alphabet $V_n$,
where each $w_i$ is of length $n$.  

\item It checks that each $w_i$ is a monomial of the form
  $w=X_{1i_1}\ldots X_{ni_n}$. I.e.\ the first index of the variables
  is strictly increasing from $1$ to $n$.

\item For the $i^{th}$ block $w_i$, since $1\le i\le n^2$, we can
  consider the index $i$ as a pair $(j,k), 1\le j,k\le n$. While
  reading the $i^{th}$ block $w_i=X_{1i_1}\ldots X_{ni_n}$ the
  automaton checks that $i_j \neq i_k$ if $j\ne k$.
\end{enumerate}

The automaton A can be easily realized as a DAG with $n^3$ layers. The
first layer has the start state $s$ and the last layer has one
accepting state $t$ and one rejecting state $t'$. Transitions are only
between adjacent layers, from $i$ to $i+1$ for each $i$. Layers are
grouped into blocks of size $n$. Let the blocks be
$B_1,B_2,\ldots,B_{n^2}$. In block $B_i$, the transitions of the
automaton will check if $i_j\ne i_k$ assuming $j\ne k$, where
$i=(j,k)$. The automaton can have the indices $j$ and $k$ hardwired in
the states corresponding to block $B_i$ and easily check this
condition. If for any block $B_i$, the indices $i_j=i_k$ then the
automaton stores this information in its state and in the end makes a
transition to the rejecting state $t'$.

Finally, the matrices of the automaton have to effect substitutions in
order to convert monomials of $P$ into monomials of $\PER$. The
matrices will replace $x_{ij}$ by the same variable $x_{ij}$ in the
first block $B_1$ and by $1$ in all subsequent blocks. The polynomial
$ID^*_n$ when evaluated on these matrices will have the permanent
polynomial $\PER_n$ in the $(s,t)^{th}$ entry of the resulting matrix.
This completes the proof of the theorem.
\end{proof}

Let $\chi:S_n\to \F\setminus\{0\}$ be any polynomial-time computable
function assigning nonzero values to each permutation in $S_n$. We
define a generalized permanent
\[
\PER^\chi_n = \sum_{\sigma\in
  S_n}\chi(\sigma)x_{1\sigma(1)}x_{2\sigma(2)}\dots x_{n\sigma(n)}.
\]

Clearly $\PER^{\chi}=(\PER_n^\chi)$ is a p-family that is in
$\vnpnc$. For which functions $\chi$ is $\PER^\chi$ $\vnpnc$-complete?
In other words, does the hardness of the noncommutative permanent
depend only on the nonzero monomial set (and the coefficients are not
important)? We give a partial answer to this question. Define
 
\begin{align*}
 \PER^*=\sum_{\sigma \in
   S_n}\underbrace{\overline{X}_{\sigma}\overline{X}_{\sigma}\ldots
   \overline{X}_{\sigma}}_{n-times}, \text{ where
   $\overline{X}_{\sigma}$ is the monomial $x_{1\sigma(1)}\ldots x_{n\sigma(n)}$}.
\end{align*}

\begin{proposition}
$\PER^*$ is  $\vnpnc$-complete.
\end{proposition}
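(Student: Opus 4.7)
My plan is to prove the two directions of completeness separately: membership $\PER^* \in \vnpnc$ and hardness $\PER \leq_{abp} \PER^*$. Since $\PER$ is $\vnpnc$-complete for $\leq_{proj}$-reductions \cite{HWY10b} and hence also for the stronger $\leq_{abp}$-reductions, transitivity of $\leq_{abp}$ (Proposition~\ref{transitive-abp}) will then give the conclusion.

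Membership is routine. I would introduce Boolean variables $y_{ij}$, $i,j\in[n]$, and write
\[
\PER^*_n(x) \;=\; \sum_{y \in \{0,1\}^{n^2}} P(y) \cdot \Bigl(\prod_{i=1}^n \sum_{j=1}^n y_{ij}\, x_{ij}\Bigr)^n,
\]
where $P(y)$ is the standard polynomial-size $0/1$-valued permutation indicator (a product across rows and columns of exactly-one-of-$n$ gadgets). The integrand is clearly in $\vpnc$, so $\PER^* \in \vnpnc$ by definition.

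For the hardness reduction, I would mimic the construction in Theorem~\ref{thm:more_id*}. The key observation is that every nonzero monomial of $\PER^*_n$ is already of the form $\overline{X}_\sigma\cdot\overline{X}_\sigma\cdots\overline{X}_\sigma$ ($n$ times) for some $\sigma \in S_n$, so the substitution automaton need not enforce any permutation or cross-block consistency condition; it only needs to extract the first copy $\overline{X}_\sigma$ and collapse the remaining $n-1$ copies to the scalar~$1$. I build a layered substitution automaton with states $q_0,q_1,\ldots,q_{n^2}$ organized into $n$ consecutive blocks of $n$ transitions each. For $k=(b-1)n+i$ with $b,i\in[n]$, the transition $q_{k-1}\to q_k$ fires only on variables $x_{ij}$ (any $j\in[n]$) and substitutes $x_{ij}$ if $b=1$ and the scalar $1$ if $b\ge 2$; all other transitions are~$0$. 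The corresponding matrices $M_{x_{ij}}$ give the $\leq_{abp}$-reduction.

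A direct calculation will confirm correctness. Writing $A_\sigma := M_{x_{1\sigma(1)}}M_{x_{2\sigma(2)}}\cdots M_{x_{n\sigma(n)}}$, the block-layered transition structure forces the only nonzero entries of $A_\sigma$ to be $A_\sigma[q_{(b-1)n},q_{bn}]$ for $b=1,\ldots,n$, equal to $\overline{X}_\sigma$ when $b=1$ and to $1$ when $b\ge 2$. Hence
\[
A_\sigma^{n}[q_0,q_{n^2}] \;=\; \overline{X}_\sigma \cdot 1\cdot 1\cdots 1 \;=\; \overline{X}_\sigma,
\]
and summing over $\sigma$ yields $\PER^*_n(\phi(X))[q_0,q_{n^2}] = \sum_{\sigma\in S_n}\overline{X}_\sigma = \PER_n$, as required. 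The only mild subtlety I anticipate is verifying that the strictly-layered block structure rules out spurious contributions in the $n$-fold matrix power, but this follows immediately from the unique-path property of the layered DAG.
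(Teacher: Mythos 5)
Your proof is correct and is essentially the paper's argument: the hardness step is the same idea (keep the first copy of $\overline{X}_\sigma$ and collapse the remaining $n-1$ copies to $1$), which the paper states as a one-line indexed projection $\PER\leq_{iproj}\PER^*$ (substitute $1$ for every variable in positions beyond the first $n$), so your explicit block-layered automaton and matrix calculation are just the $\leq_{abp}$-realization of that same projection. The only difference is in membership, where the paper invokes the criterion from HWY10b that polynomial-time computability of coefficients from monomials puts a p-family in $\vnpnc$, while you give the equivalent explicit Boolean sum with a permutation-indicator polynomial; both are routine and correct.
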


The above proposition is easy to prove: $\PER^*$ is in $\vnpnc$
because coefficients of each monomial is polynomial-time computable
from the monomial \cite{HWY10b}. Furthermore, $\PER$ is
$\le_{iproj}$-reducible to $\PER^*$ by substituting $1$ for all except
the first $n$ variables in every monomial. 

Now, consider the polynomial
\[
 \PER^{*,\chi}=\sum_{\sigma \in
   S_n}\chi(\sigma)\underbrace{\overline{X}_{\sigma}\overline{X}_{\sigma}\ldots
   \overline{X}_{\sigma}}_{n-times}.
\]

We prove the following theorem about $\PER^\chi$ and $\PER^{*,\chi}$
under assumptions about the function $\chi$.

\begin{theorem}\label{thm:vnpc_gen}
 Suppose the function $\chi$ is such that $|\chi(S_n)|\le p(n)$ for
 some polynomial $p(n)$ and each $n$. Then
\begin{itemize}
\item If $\chi$ is computable by a $1$-way logspace Turing machine
  then $\PER\leq_{abp}\PER^{\chi}$.
\item If $\chi$ is computable by a logspace Turing machine then
  $\PER\leq_{abp}\PER^{*,\chi}$.
\end{itemize}
\end{theorem}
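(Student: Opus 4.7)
The plan for both items exploits the bound $|\chi(S_n)|\le p(n)$: the nonzero values $c_1,\ldots,c_k$ of $\chi$ are all invertible in $\F$, and the cancellation factor $\chi(\sigma)^{-1}$ can be injected into a matrix substitution that simulates an automaton computing $\chi$. The substitution will act on the unique monomial in $\PER^{\chi}_n$ (respectively $\PER^{*,\chi}_n$) corresponding to each $\sigma\in S_n$ and produce $\overline{X}_{\sigma}$ with unit coefficient in the designated $(s,t)$-entry, so that summing gives $\PER_n$.

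For Part~1, a $1$-way logspace TM for $\chi$ has $\poly(n)$ configurations and is therefore equivalent to a polynomial-size DFA $A$ reading $x_{1\sigma(1)}\cdots x_{n\sigma(n)}$ once from left to right and halting in a state $t_\ell$ iff $\chi(\sigma)=c_\ell$. First I would construct $A$, then set the matrix $M_{ij}$ substituted for each variable $x_{ij}$ (with $i<n$) by $M_{ij}[u,v]=x_{ij}$ whenever $A$ has an edge $u\to v$ labelled $x_{ij}$ at position $i$, and $0$ otherwise. At the final position $i=n$ I would merge all accepting states $t_\ell$ into a single sink $t$ and set $M_{nj}[u,t]=c_\ell^{-1}x_{nj}$ whenever the edge of $A$ from $u$ reading $x_{nj}$ would have ended at $t_\ell$. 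Then the $(s,t)$-entry of $\PER^{\chi}_n(M)$ is precisely $\sum_\sigma \chi(\sigma)\cdot \chi(\sigma)^{-1}\cdot \overline{X}_{\sigma}=\PER_n$, giving $\PER\leq_{abp}\PER^{\chi}$.

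For Part~2, the $n$ concatenated copies of $\overline{X}_{\sigma}$ in each monomial of $\PER^{*,\chi}_n$ supply $n$ left-to-right passes over $\sigma$, enabling simulation of a $2$-way logspace machine for $\chi$. I would build a polynomial-size layered substitution automaton whose state encodes the simulated TM configuration together with the current pass index, so that within one copy of $\overline{X}_{\sigma}$ the automaton advances the TM's input head and uses the boundaries between successive copies as an effective ``jump back to position~$1$''. If more input reads are required than $n$ passes can support, I instead reduce into $\PER^{*,\chi}_{p(n)}$ for a sufficiently large polynomial $p(n)$, using matrix entries that restrict the contributing permutations $\tau\in S_{p(n)}$ to those with $\tau|_{[n]}=\sigma$ for some $\sigma\in S_n$ and $\tau(i)=i$ for $i>n$ (all other transitions being set to $0$). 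Exactly as in Part~1, the final layer absorbs the scalar $c_\ell^{-1}$ to cancel $\chi(\tau)$, leaving $\PER_n$ in the $(s,t)$-entry.

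The principal obstacle lies in Part~2: one must carefully design the substitution matrices so that (i) the simulated-TM state threads correctly across the boundaries between successive copies of $\overline{X}_{\sigma}$, (ii) only permutations in the intended embedded copy of $S_n$ contribute, all others being zeroed out, and (iii) the chosen $p(n)$ together with the $p(n)$ available passes suffices for the TM's input-read pattern. Once the automaton is correctly set up, the coefficient-cancellation at the last layer is mechanical and mirrors Part~1 verbatim; indeed the previously stated proof that $\PER^*$ is $\vnpnc$-complete is recovered as the special case $\chi\equiv 1$ of this construction.
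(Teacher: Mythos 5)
Your proposal is correct and follows essentially the same route as the paper: simulate the logspace machine for $\chi$ by a polynomial-size substitution automaton whose state records the work tape and the output written so far (feasible because $|\chi(S_n)|\le p(n)$), use the successive blocks $\overline{X}_{\sigma}$ as repeated left-to-right passes to eliminate two-way head motion, and cancel the coefficient by absorbing $1/\chi(\sigma)$ at acceptance while substituting $1$ for all but one block. Your explicit padding into $\PER^{*,\chi}_{p(n)}$ when $n$ passes do not suffice addresses a detail the paper handles only implicitly (it feeds $P(n)$ copies of $\sigma$ to the one-way machine), so your write-up is, if anything, slightly more careful on that point.
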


\begin{proof}
We explain the second part of the theorem. The first part follows from
the proof of the second. The idea is to construct an automaton from
the given logspace machine such that for a given $\sigma \in S_n$, the
automaton computes $\frac{1}{\chi(\sigma)}$ in the field $\F$.

Let $T$ be a logspace Turing machine which uses space $s=O(\log n)$,
computing $\chi$. Thus, total running time of $T$ is bounded by
$P(n)$, where $P(n)$ is some fixed polynomial in $n$. Since the range
of $\chi$ is $p(n)$ bounded in size, we can encode in a state of
the automaton the following:

\begin{itemize}
 \item Input head position,
 \item Content of working tape, and
 \item Content of output tape.
\end{itemize}

The number of states is bounded by a polynomial in $n$.  We can
convert this log-space machine $T$ on input $\sigma$ into a one-way
log-space machine $T'$ on a modified input as follows:

\begin{itemize}
 \item The input to $T'$ is the concatenation of $P(n)$ copies of
   $\sigma$. Thus the input to $T'$ is of the form
   $\sigma\sigma\ldots\sigma$, with $P(n)$ many $\sigma$.
 \item At a step $i$, $T'$ reads from the $i^{th}$ copy.
\end{itemize}

The difference between machine $T'$ and $T$ is that $T'$ is a
\emph{1-way} logspace machine whose input head moves always to the
right. For $\sigma\in S_n$, we can convert $T'$ into a deterministic
automaton with $\poly(n)$ many states as follows: there are only
polynomially many instantaneous descriptions of $T'$. This consists of
the input head position, the work tape contents and head position, and
the current output string (which is a prefix of some element in the
range $\chi(S_n)$). When this automaton completes reading the input,
suppose the state $q$ contains the output element
$\alpha=\chi(\sigma)$. The automaton has a transition from $q$ to the
unique final state $t$ labeled by scalar $1/\chi(\sigma)$.

Finally, we can modify this automaton to work on the monomials
$\overline{X}_\sigma\overline{X}_\sigma\dots\overline{X}_\sigma$,
where it replaces all but the first block of variables by $1$.

When the polynomial $\PER^{*,\chi}$ is evaluated on the matrices
corresponding to the above automaton (with the substitutions), the
$(s,t)^{th}$ entry of the output matrix will be the permanent
polynomial $\PER_n$.
\end{proof}



\end{document}